\numberwithin{equation}{section}
\theoremstyle{plain}
\newtheorem{theorem}{Theorem}[section]
\newtheorem{lemma}[theorem]{Lemma}
\newtheorem{remark}[theorem]{Remark}
\newtheorem{example}[theorem]{Example}
\newcounter{assumptcounter}
\newenvironment{assumpt}{%
    \refstepcounter{assumptcounter}
    \begin{enumerate}[align=left, leftmargin=3.5em]
        \item[\textbf{Assumption~\theassumptcounter.}]
    }{\end{enumerate}}
\newcommand{\eps}{\varepsilon}  %
\newcommand{\Eps}{{\scalebox{0.8}{\ensuremath{\mathcal{E}}}}}
\newcommand{\Eta}{{\scalebox{1.2}{\ensuremath{\eta}}}}
\newcommand{\muref}{\mu_{\mathrm{ref}}}
\newcommand{\mutar}{\mu_{\mathrm{tar}}}
\newcommand{\piref}{\pi_{\mathrm{ref}}}
\newcommand{\pitar}{\pi_{\mathrm{tar}}}
\newcommand{\Dkl}{\ensuremath{D_{\mathrm{KL}}}}
\newcommand*{\dd}{\ensuremath{\,\mathrm{d}}}  %
\newcommand*{\dx}[1]{\ensuremath{\dd{#1}}}  %
\DeclareMathOperator{\grad}{\nabla}
\newcommand{\normal}[2]{\ensuremath{\mathcal{N}\big({#1},{#2}\big)}}  %
\newcommand{\given}{\,\vert\,}
\newcommand{\smallbullet}{%
    \raisebox{-0.8ex}{\scalebox{2.1}{$\cdot$}}%
}
\colorlet{LightGray}{black!15!white}
\colorlet{Gray}{black!40!white}
\definecolor{AmericanRose}{rgb}{1.0, 0.01, 0.24}
\definecolor{ForestGreen}{rgb}{0.13, 0.55, 0.13}
\colorlet{DarkGreen}{ForestGreen!78!black}
\definecolor{MidnightBlue}{rgb}{0.1, 0.1, 0.44}
\definecolor{Orchid}{rgb}{0.85, 0.44, 0.84}
\definecolor{Pumpkin}{rgb}{1.0, 0.46, 0.09}
\definecolor{Beaver}{rgb}{0.62, 0.51, 0.44}
\definecolor{Teal}{rgb}{0.0, 0.5, 0.5}
\definecolor{NavyBlue}{rgb}{0.02, 0.5., 0.8}
\definecolor{PTBBlue}{rgb}{0.01953125, 0.6171875, 0.83984375}
\definecolor{PythiaGreen}{rgb}{0.5625, 0.65234375, 0.0625}
\definecolor{QumphyViolet}{rgb}{0.63921569, 0.48627451, 0.96470588}
\definecolor{QumphyOrange}{rgb}{0.85098039, 0.65882353, 0.30980392}
\newcommand{\todo}[1]{}
\newcommand{\nando}[1]{}
\newcommand{\philipp}[1]{}
\newcommand{\maren}[1]{}
\newcommand{\sebastian}[1]{}
\newcommand{\old}[1]{}
\title{Estimating systematic errors in Bayesian inversion using transport maps}
\author[1, 2]{Maren Casfor}
\author[1]{Philipp Trunschke}
\author[1]{Sebastian Heidenreich}
\author[1]{Nando Hegemann}
\affil[1]{Physikalisch-Technische Bundesanstalt, Abbestr. 2--12, 10587 Berlin, Germany}
\affil[2]{corresponding author: maren.casfor@ptb.de}
\date{\today}
\begin{document}

\maketitle

\begin{abstract}
  In indirect measurements, the measurand is determined by solving an inverse problem which requires a model of the measurement process.
  Such models are often approximations and introduce systematic errors leading to a bias of the posterior distribution in Bayesian inversion.
  We propose a  unified framework that combines transport maps from a reference distribution to the posterior distribution with the model error approach.
  This leads to an adaptive algorithm that jointly estimates the posterior distribution of the measurand and the model error.
  The efficiency and accuracy of the method are demonstrated on two model problems, showing that the approach effectively corrects biases while enabling fast sampling.
\end{abstract}

\section{Introduction}\label{sec:introduction}
From healthcare to engineering, indirect measurements play an important role in our modern society.
They allow to obtain information about quantities that are not directly accessible.
Prime examples are medical imaging, scattering or spectroscopic measurements.
To evaluate indirect measurements, a model is necessary to reconstruct the variables of interest, called \emph{measurand}, by solving an inverse problem.
These inverse problems are typically ill-posed, and the data used are often noisy due to the measurement setup or devices used \cite{Tarantola2005_Inverseproblemtheory}.
In order to ensure traceability of measurements, i.e. each contribution to the measurement uncertainty can be related to a reference through a documented unbroken chain of comparisons~\cite{JCGM2021_Internationalvocabularymetrology}, it is necessary to determine both the measurand and its associated uncertainties.
The statistical method of Bayesian inversion has proven to be very efficient in solving inverse problems and providing reconstructions of the measurand together with its uncertainties~\cite{Stuart2010_InverseproblemsBayesian}.

There exists a variety of numerical methods to solve Bayesian inverse problems.
Among these, \emph{Markov chain Monte Carlo} (MCMC) methods~\cite{Brooks2011_HandbookMarkovchain} are particularly popular and frequently used because they are widely applicable and, unlike grid-based discretizations, do not suffer from an exponential growth in computational cost with dimension.
Nevertheless, in practical applications the efficiency of MCMC algorithms can reduce with increasing dimension~\cite{Robert1999_MonteCarlostatistical}.
Moreover, MCMC methods are prohibitively slow, even for low dimensions, if evaluating the model is computationally expensive.
To overcome this issue, alternative approaches such as variational inference or sequential Monte Carlo have been proposed to find approximations for the posterior distribution~\cite{Blei2017_Variationalinferencereview,DelMoral2006_Sequentialmontecarlo}.

In an alternative approach the posterior distribution is characterized by a transport map, which is a pushforward map from a simple reference distribution to the posterior~\cite{el2012bayesian,parno2018transport}.
The advantage of this approach is that, once the transport map has been determined, further samples from the posterior distribution can be obtained very efficiently.
Furthermore, for applications such as the evaluation of multiple measurements, the transport map method is very efficient if the shape of the posterior changes only slightly from measurement to measurement.

Another aspect of inverse problems is the reliability of the forward model.
Measurements are typically modelled by a deterministic \emph{forward model} and stochastic noise, the \emph{measurement error}.
However, it may be the case that not all information is known and the model is erroneous.
It may also be the case that the exact forward model is too complex and an approximation must be used instead.
In both cases the resulting \emph{model error} causes a bias in the posterior distribution and in turn in the estimation of the measurand~\cite{Kaipio2007_Statisticalinverseproblems,Brynjarsdottir2014_Learningphysicalparameters,Plock2023_Impactstudynumerical}.
The interplay between model and measurement errors and its effect on biases was investigated in detail for Bayesian inversion in~\cite{Kennedy2001_Bayesiancalibrationcomputer}.

In the present paper, we introduce an approach that unifies Bayesian inversion, transport maps and the iterative model error approach from~\cite{Calvetti2018_Iterativeupdatingmodel} to tackle inverse problems involving computationally expensive forward models affliced by modelling errors.
Our proposed approach iteratively minimises a \emph{Kullback--Leibler} (KL) divergence involving nonlinearly nested expectations.
In the machine learning community, such objective functions are often bounded using Jensen's inequality.
However, it is not clear if the resulting loss function exhibits the same minimiser as the original objective.
We investigate the Jensen bound and a nested Monte Carlo approach and demonstrate that the Jensen bound can introduce a significant bias.
We demonstrate the effectiveness of our approach on two simple yet illustrative examples: an affine forward model and a simple machine, first introduced in~\cite{Brynjarsdottir2014_Learningphysicalparameters}.
We show that a bias caused by the model error is corrected by the proposed algorithm, but that deviations of the variance remain.

The paper is structured as follows.
Section~\ref{sec:methods} recalls fundamental concepts and develops an abstract version of the proposed iterative sampling scheme.
Section~\ref{sec:algorithm} introduces a practical way to perform this iteration by means of transport maps.
Additionally, we compare the computational costs of the proposed approach and compare it to MCMC-based alternatives.
In Section~\ref{sec:num_experiemnts}, we apply the method to two examples and finally conclude with a discussion of our findings.

\section{Fundamental concepts}\label{sec:methods}

In this section, we present three fundamental concepts we use in the course of this article. 
The first topic is the concept of Bayesian Inversion and the basic assumptions we use here. 
In the second part, we introduce the used model error approach from~\cite{Calvetti2018_Iterativeupdatingmodel} and contextualise it with other existing approaches. 
To give a justification for this approach, we investigate some basic properties of the model error distribution from which we derive the assumptions necessary for our proposed method.
We then demonstrate how these assumptions naturally lead to the proposed iterative method.
In the last paragraph, we give a short introduction into the topic of transport maps, which we use in order to sample from the posterior distribution.

\subsection{Bayesian inversion}%
\label{subsec:methods:bayesian_inversion}

Many practical applications require solving \emph{inverse problems}, i.e.\ to find an unknown parameter $x$ to a mathematical model $F$ given observations of the model's output
\begin{equation}
    \label{eq:methods:bayesian_inversion:inverse_problem_noiseless}
    y = F(x)~.
\end{equation}
The parameter $x \in \mathcal{X}$ and the \emph{observation} $y \in \mathcal{Y}$ are assumed to be elements of separable Banach spaces $\mathcal{X}$ and $\mathcal{Y}$, typically Euclidean spaces of different but finite dimensions. 
The observation operator, called \emph{forward model}, $F \colon \mathcal{X} \to \mathcal{Y}$ is assumed to be measurable with respect to the induced Borel algebras.

Inverse problems are often ill-posed~\cite{Stuart2010_InverseproblemsBayesian}, meaning that a solution may not exist, may not be unique, or that small changes in the observation $y$ may lead to extreme, even discontinuous, changes in the sought parameter $x$.
This sensitivity to the observation $y$ is particularly problematic since $y$ is often afflicted by an \emph{observational noise} $\eta$ and given by
\begin{equation}
    \label{eq:methods:bayesian_inversion:inverse_problem_exact_model}
    y = F(x) + \eta~.
\end{equation}
However, in many applications $\eta$ is a measurement noise which can naturally be modelled as a random variable with known distribution $\Eta\sim\mu_\Eta$.
In this setting, we can mitigate the ill-posedness of the inverse problem by modelling $x$ as a random variable $X$, encoding our prior beliefs about $x$ in its \emph{prior distribution} $\mu_X$, implicitly regularising the problem~\cite{Stuart2010_InverseproblemsBayesian,Kaipio2005_StatisticalComputationalInverse,Dashti2015_BayesianApproachInverse}.
Then,~\eqref{eq:methods:bayesian_inversion:inverse_problem_exact_model}, which must now be considered as the realizations of the corresponding random variables, defines the $X$- and $\Eta$\hbox{\,-\,}dependent random variable $Y$ with distribution $\mu_{Y|X,\Eta} = \delta_{Y = F(X) + \Eta}$. Thus, Bayes’ formula can be used to calculate the \emph{posterior distribution}
\begin{equation}
\label{eq:methods:bayesian_inversion:bayes_posterior}
    \dd\mu_{X|Y}(x \vert y)
    = \frac{1}{\pi_{Y}(y)} \pi_{Y|X}(y \vert x) \dd \mu_X(x)
    ~,
\end{equation}
with the \emph{likelihood function} $\pi_{Y|X}(y \vert x) = \pi_{\Eta}(y-F(x))$ and the observation-dependent normalisation constant $\pi_{Y}(y) = \int_{\mathcal{X}} \pi_{Y|X}(y|x) \dd \mu_X(x)$.
For the sake of simplicity, we assume here that all measures are absolutely continuous with respect to the Lebesgue measure $\lambda$ and denote their corresponding densities by $\pi = \frac{\dd\mu}{\dd\lambda}$, with appropriate subscripts.\footnote{The reader is referred to~\cite[Section~7.4]{Klenke2020} for an introduction to probability density functions.}

Although choosing $\mu_X$ and $\mu_{\Eta}$ is non-trivial (especially in the absence of prior knowledge about $x$) and can impact the posterior distribution $\mu_{X|Y}$ significantly (cf.~\cite{Gustafson1996_Localsensitivityposterior,Heidenreich2015_statisticalinverseproblem}), the present work considers these choices to be fixed.
In addition to that, we assume for the observational noise to be independent of $X$, as it is a reasonable assumption for many applications \cite{Farchmin2019_efficientapproachglobal,Andrle2021_anisotropyopticalconstants}.
We summarise these beliefs in the subsequent two assumptions.

\begin{assumpt}
\label{assumpt:methods:bayesian_inversion:prior_assumption}
    The prior distribution $\mu_X$ is fixed and encodes all knowledge about $X$.
\end{assumpt}

\begin{assumpt}
\label{assumpt:methods:bayesian_inversion:noise_assumption}
    The observational noise $\Eta$ is independent of $X$ and the distribution $\mu_\Eta$ is fixed.
\end{assumpt}

\subsection{Model error}%
\label{subsec:methods:model_error}

For given prior and noise distribution the posterior~\eqref{eq:methods:bayesian_inversion:bayes_posterior} can be evaluated pointwise, but calculating quantities of interest, like marginals, is infeasible.
In order to have access to the posterior we need to use sample-based approaches such as MCMC.
These algorithms make use of many posterior evaluations, which leads to many function evaluations of the forward model~$F$.
In many applications, however, the exact forward model $F \colon \mathcal{X} \to \mathcal{Y}$ is unknown or computationally expensive.

To circumvent this problem, a reduced forward model $f \colon \mathcal{X} \to \mathcal{Y}$, which is easier to evaluate, can be used and the \emph{reduced inverse problem} 
\begin{equation}
    \label{eq:methods:model_error:reduced_inverse_problem}
    y = f(x) + \eta
\end{equation}
is then solved instead of~\eqref{eq:methods:bayesian_inversion:inverse_problem_exact_model}.
This simplification introduces an error that propagates to the posterior distribution, as illustrated in~\cite{Brynjarsdottir2014_Learningphysicalparameters}.
The resulting error can not be captured by the observational noise $\Eta$, which covers only the statistical error of the measurement process.
Ignoring the parameter-dependent systematic error can, hence, lead to a significant overestimation of $\eta$ and conflicts with the interpretation of $\eta$ as an $X$-independent measurement noise.
This section investigates an additional error term originating from a potential systematic error arising due to inaccuracies in the forward model.
Several strategies exist to compensate for this error \cite{Kennedy2001_Bayesiancalibrationcomputer,Calvetti2018_Iterativeupdatingmodel,Jia2018_InfinitedimensionalBayesian}, which can mainly be formalised by introducing a new random variable, the \mbox{\emph{model error} $\Eps$}, to the model, yielding
\begin{equation}
\label{eq:methods:model_error:inverse_problem_with_model_error}
    y = f(x) + \varepsilon + \eta~.
\end{equation}
We encode this statistical model in the subsequent assumption.
\begin{assumpt}
\label{assumpt:methods:model_error:distribution_of_observation_given_by_delta_distribution}
    The observation is distributed according to $\mu_{Y|X,\Eps,\Eta} = \delta_{Y=f(X) + \Eps + \Eta}$.
\end{assumpt}
Using the definition of the model error function
\begin{equation}
    \label{eq:methods:model_error:def_model_error_function_M}
     M(x) := F(x) - f(x)~
\end{equation}
as in \cite{Calvetti2018_Iterativeupdatingmodel}, we can see that the model~\eqref{eq:methods:model_error:inverse_problem_with_model_error} interpolates between the exact inverse problem~\eqref{eq:methods:bayesian_inversion:inverse_problem_exact_model}, with $\Eps = M(X)$ corresponding to $\mu_{\Eps | X} = \delta_{\Eps = M(X)}$,
and the reduced inverse problem~\eqref{eq:methods:model_error:reduced_inverse_problem}, with $\Eps = 0$ corresponding to $\mu_{\Eps} = \delta_{\Eps=0}$.
Another approach, commonly used in application, is the choice $\Eps\sim \mu_{\Eps} = \normal{0}{\Sigma_{\Eps}}$, where the covariance matrix $\Sigma_{\Eps}$ is estimated as well (cf.~\cite{Heidenreich2015_BAYESIANAPPROACHSTATISTICAL,Henn2012}).
This approach is easy to apply, but especially the assumption, that $\Eps$ is a Gaussian variable centred in $0$ is a strong restriction.
To motivate a more general approach presented in~\cite{Calvetti2018_Iterativeupdatingmodel} we note that 
choosing $\Eps = M(X)$ is intuitively the most accurate way to describe the model error, but since we end up with the exact inverse problem~\eqref{eq:methods:bayesian_inversion:inverse_problem_exact_model}, it requires many evaluations of the exact forward model $F$ (cf.~Table~\ref{fig:algortithm:compare_cost:table_costs}).
Our choice for $\mu_{\Eps}$ should therefore require less evaluations of the exact forward model and moreover contain some reasonable properties describing the model error.
In this context we note that the joint distribution $\mu_{Y,X,\Eps,\Eta}$ uniquely determines the Bayesian posterior $\mu_{X|Y}$ and hence should be used to define $\mu_{\Eps}$.

Subsequently, we discuss two intuitive properties of the exact model error distribution $\mu_{\Eps}$ for $\Eps = M(X)$ which we like to impose on our simplified version of $\mu_{\Eps}$.
Both properties follow directly from the definition of the model error $\Eps = M(X)$.
Since $\Eps$ depends in a deterministic way on the parameter $X$, given knowledge of $X$, additional knowledge of the observed data $Y$ should not influence our prediction of $\Eps$.
Moreover, as the model error depends only on $X$, its posterior distribution should be completely determined by the posterior distribution of $X$, formally
$$
    \mu_{\Eps | Y}(A \,|\, Y)
    = \mathbb{P}[\Eps\in A \,|\, Y]
    = \mathbb{P}[X\in M^{-1}(A) \,|\, Y]
    = \mu_{X | Y}(M^{-1}(A) \,|\, Y)
    = M_\# \mu_{X | Y}(A \,|\, Y) .
$$
The described properties can be summarized as two assumptions for the joint distribution listed below.

\begin{assumpt}
    \label{assumpt:methods:model_error:eps_givenX_indep_of_noise_meas}
    The systematic error $\Eps$ is a function of the input $X$, independent of the observation $Y$, i.e.~$\mu_{\Eps|Y,X} = \mu_{\Eps|X}$.
\end{assumpt}

\begin{assumpt}
    \label{assumpt:methods:model_error:eps_givenY_defined_by_M_posterior}
    The posterior systematic error $\Eps|Y$ is the image of the posterior input $X|Y$ under $M$, i.e.~$\mu_{\Eps|Y} = M_\#\mu_{X|Y} \coloneqq \mu_{X|Y} \circ M^{-1}$.
\end{assumpt}
Since $\Eps = M(X)$ models reality most faithfully, we argue that it is natural to impose these assumptions onto our simplified model error distribution.
To provide rigorous support for the validity of these assumptions, we consider the joint distribution $\mu^{\star}_{Y,X,\Eps,\Eta}$ corresponding to $\Eps = M(X)$.
The subsequent lemma proves that $\mu^{\star}_{Y,X,\Eps,\Eta}$ satisfies Assumptions~\ref{assumpt:methods:model_error:eps_givenX_indep_of_noise_meas} and~\ref{assumpt:methods:model_error:eps_givenY_defined_by_M_posterior}.

\begin{table}[ht]
    \centering
    \begin{subtable}[t]{0.54\textwidth}
    \resizebox{\textwidth}{!}{
        \begin{tabular}{|p{2.4cm}|p{5.2cm}|p{3.2cm}|}
            \hline
            ~ & ~ & \\
            ~               &     \textbf{less knowledge required} & \textbf{more knowledge required} \\
            \hline
            & & \\[-5pt]
            \textbf{independent of $X$} &
                \begin{tabular}{l}
                     $\smallbullet ~\Eps = 0$ \\
                     $\smallbullet ~ \Eps \sim \mathcal{N}\left(0\, ,\, \sigma_{\varepsilon}^2I_{d_{\mathcal{Y}}} \right)$,\\
                     \hspace{0.2cm} for $\sigma_{\Eps}\in \mathbb{R}$
                \end{tabular}
                &
                \begin{tabular}{c}
                        \tikz[overlay,remember picture] \node (cell22) {};\\
                     \textcolor{red}{$\smallbullet~\Eps \sim M_{\#} \pi_{X \given Y}$}\\
                     \hspace{0.2cm}\textcolor{red}{\textit{\small(our model)}}
                \end{tabular}\\[1cm]
            \hline
            & & \\[-5pt]
            \textbf{dependent of $X$} &
                \begin{tabular}{l}
                     $\smallbullet ~ \varepsilon \sim \mathcal{N}\left(0\, ,\, \Sigma_{\Eps}  \right)$,  \\
                     \hspace{0.2cm}{\small for $\Sigma_{\Eps} = \operatorname{diag}\Big( ( a f(X) ) ^2 + b^2 \Big)$,}\\
                     \hspace{0.2cm}{\small $a,b \in \mathbb{R}$}
                \end{tabular}
                &
                \begin{tabular}{l}
                     $\smallbullet ~ \Eps \sim M(X)$\\[0.4cm]
                     $\smallbullet ~ \Eps \sim GP(0 \, , \, K) $
                \end{tabular}\\[1cm]
        \hline 
        \multicolumn{1}{c}{\begin{tabular}{c}
             \\[1.6cm]
        \end{tabular}}
        \end{tabular}
        \begin{tikzpicture}[overlay,remember picture]
            \draw[gray, thick, minimum height=4em, minimum width=3em] ([shift={(-4em, 0em)}]cell22) rectangle ++(8em, -6.5em);
            \node at ({cell22 |- 0em,+6em})  {\textcolor{gray}{\textit{\small $F$ known}}};
        \end{tikzpicture}
        } %
    \caption{model error matrix}
    \label{tab:methods:model_error:table_models_overview}
    \end{subtable}
    \hfill
    \begin{subtable}[t]{0.45\textwidth}
    \resizebox{\textwidth}{!}{
        \begin{tabular}{|p{4.1cm} | p{4.5cm}|}
        \hline
        \multicolumn{2}{|c|}{
        \begin{tabular}{c}
        ~ \\[-5pt]
        Bayesian modelling assumptions \textit{(given)}\\
        $\pi_{Y \given X, \Eps ,\Eta}, \pi_X, \pi_{\Eta}$\\[0.2cm]
        \end{tabular}
        } \\
        \hline
            \begin{tabular}{c}
                ~\\
                exact model for $\Eps$\\
                $\mu_{\eps} = \delta_{\Eps = M(X)}$\\
                $ \Downarrow $\\
                \begin{tabular}{|c|}\arrayrulecolor{orange}
                    \hline
                    \\[-5pt]
                    structural properties\\
                    \hspace{-0.4cm}$\smallbullet \pi_{\Eps \given Y , X} = \pi_{\Eps \given X}$\\
                    $\smallbullet \pi_{\Eps \given Y} = M_{\#} \pi_{X \given Y}$\\[5pt]
                    \hline
                \end{tabular}\\
                ~ \\
            \end{tabular}
            & 
            \begin{tabular}{|c|}
            \arrayrulecolor{orange}
            \hline
            \\[-5pt]
            "black-box"-assumption\\
            $\pi_{\Eps \given X} = \pi_{\Eps}$\\[0.5cm]
            \hline
            \end{tabular}  
            \\
        \hline
        \multicolumn{2}{c}{
        \begin{tabular}{c}
         \tikz[overlay,remember picture] \node (node_arrow) {};\\
         \begin{tabular}{|c|}\arrayrulecolor{orange}
            \hline
            \\[-5pt]
            ~~resulting model~~\\
            ~~$\pi_{\Eps} = M_{\#} \pi_{X \given Y }$~~\\[0.2cm]
            \hline
        \end{tabular}
        \end{tabular}
        }
        \end{tabular}
        \begin{tikzpicture}[overlay, remember picture]
            \draw[->, orange, thick] ([shift={(-8em, 2em)}]node_arrow.south) to  [out=280, in=160] ++(3.5em, -4em);
            \draw[->, orange, thick] ([shift={(8em, 4em)}]node_arrow.south) to  [out=-100, in=20] ++(-3.5em, -5.5em);
        \end{tikzpicture}
        } %
        \caption{modelling assumptions}
        \label{tab:methods:model_error:modelling_assumptions}
    \end{subtable}
    \caption{ The table~\ref{tab:methods:model_error:table_models_overview} provides an overview of some existing model error approaches,
    which are categorised according to whether they depend on $X$ and how much knowledge is required for the respective approach.
    Here we distinguish between the knowledge required to adequately model $\mu_{\Eps}$ (e.g. the selection of a hyperparameter) and the approach used here, where the exact forward model has to be known.
    In~\ref{tab:methods:model_error:modelling_assumptions} all assumptions used to define the model error distribution $\mu_{\Eps}$ are summarized.
    Hereby we divide the assumptions into general modelling assumptions for Bayes, referring to the prior and noise distribution and the conditional distribution for the observation (see Assumptions~\ref{assumpt:methods:bayesian_inversion:prior_assumption}, \ref{assumpt:methods:bayesian_inversion:noise_assumption}, \ref{assumpt:methods:model_error:distribution_of_observation_given_by_delta_distribution}), structured properties resulting from the exact model (see Assumption \ref{assumpt:methods:model_error:eps_givenX_indep_of_noise_meas}, \ref{assumpt:methods:model_error:eps_givenY_defined_by_M_posterior}) and the "black-box" assumption, that $\Eps$ is independent of $X$ (Assumption \ref{assumpt:methods:model_error:epsilon_independent_of_X}). The sum of all these assumptions leads to the model choice for the model error.
        }
\end{table}

\begin{lemma}
\label{lem:methods:model_error:properties_exact_model}
    It holds that $\pi^{\mathrm{\star}}_{\Eps| Y,X} = \pi^{\mathrm{\star}}_{\Eps| X}$ and $\pi^{\mathrm{\star}}_{\Eps| Y}(\Eps\given y) = M_\# \pi^{\mathrm{\star}}_{X|Y}(\Eps\given y)$.
\end{lemma}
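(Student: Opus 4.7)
The strategy is to work at the level of measures rather than Lebesgue densities, since under the exact model the conditional $\mu^\star_{\Eps|X}=\delta_{\Eps=M(X)}$ is singular and has no density in the $\eps$ variable. The density-level identities in the lemma should therefore be read as the corresponding measure identities, and both then follow almost tautologically from the fact that $\Eps$ is a deterministic function of $X$.

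First, I would write the joint law $\mu^\star_{Y,X,\Eps,\Eta}$ explicitly. By Assumptions~\ref{assumpt:methods:bayesian_inversion:prior_assumption}--\ref{assumpt:methods:model_error:distribution_of_observation_given_by_delta_distribution} together with the exact-model choice $\mu^\star_{\Eps|X}=\delta_{\Eps=M(X)}$, we have $X\sim\mu_X$, $\Eta\sim\mu_\Eta$ with $\Eta$ independent of $X$, $\Eps = M(X)$ almost surely, and $Y = f(X)+\Eps+\Eta = F(X)+\Eta$.

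For the first identity, I would use that $\Eps$ is $\sigma(X)$-measurable, so for every Borel set $A$
$$
  \mathbb{P}[\Eps\in A \,|\, Y,X] \;=\; \mathbf{1}_{\{M(X)\in A\}} \;=\; \mathbb{P}[\Eps\in A \,|\, X],
$$
giving equality of the regular conditional distributions $\mu^\star_{\Eps|Y,X}=\mu^\star_{\Eps|X}$. Adding $Y$ to the conditioning $\sigma$-algebra does not change the conditional expectation of any function of $\Eps$, since $\Eps$ is already determined by $X$.

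For the second identity, I would combine the tower property with the definition of the pushforward: for every Borel set $A$ and $\mu_Y$-a.e.\ $y$,
$$
  \mu^\star_{\Eps|Y}(A\,|\,y) \;=\; \mathbb{E}\big[\mathbf{1}_{\{M(X)\in A\}} \,\big|\, Y=y\big] \;=\; \mu^\star_{X|Y}(M^{-1}(A)\,|\,y) \;=\; (M_\#\mu^\star_{X|Y})(A\,|\,y).
$$
Alternatively, one can insert the disintegration $\mu^\star_{\Eps|Y}(\cdot\,|\,y) = \int \mu^\star_{\Eps|Y,X}(\cdot\,|\,y,x)\,\mathrm{d}\mu^\star_{X|Y}(x\,|\,y)$ and use the first identity to rewrite the inner kernel as $\delta_{\Eps=M(x)}$, which integrates against $\mu^\star_{X|Y}$ to exactly $M_\#\mu^\star_{X|Y}$.

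The main obstacle is purely notational: reconciling the density notation in the lemma statement with the fact that $\mu^\star_{\Eps|X}$ concentrates on the graph of $M$ and admits no Lebesgue density in the $\eps$ variable, so that strictly speaking $\pi^\star_{\Eps|X}$ does not exist as a function. One can either work at the measure level as sketched above, or formally interpret $\pi^\star_{\Eps|X}(\cdot\,|\,x)$ as the distributional density $\delta(\,\cdot-M(x))$ and verify that the resulting symbolic manipulations agree with the regular conditional distributions; existence and $\mu_Y$-almost-sure uniqueness of the latter are guaranteed by the standard disintegration theorem on the Polish spaces in question.
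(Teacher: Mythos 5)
Your proof is correct and follows the same route as the paper's: the first identity comes from the fact that $\Eps=M(X)$ is $\sigma(X)$-measurable, so conditioning additionally on $Y$ changes nothing, and the second follows by disintegrating $\mu^{\star}_{\Eps|Y}$ over $\mu^{\star}_{X|Y}$ and applying the first identity --- your ``alternative'' derivation at the end is literally the paper's computation. The one substantive difference is that you work at the level of measures, whereas the paper manipulates conditional densities and writes $\delta_{\Eps=M(x)}$ inside a Lebesgue integral; since $\mu^{\star}_{\Eps|X}$ concentrates on the graph of $M$ and admits no Lebesgue density in $\varepsilon$, your formulation is the more rigorous one, and your observation that the lemma's density notation must be read as shorthand for identities of regular conditional distributions is a fair criticism of the presentation. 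The paper also reaches the first identity slightly less directly: it factorises $\pi^{\star}_{\Eps,Y|X}$ to establish conditional independence of $\Eps$ and $Y$ given $X$, justifying $\pi^{\star}_{Y|\Eps,X}=\pi^{\star}_{Y|X}$ in a footnote by the same $\sigma$-algebra argument ($\sigma(\Eps,X)=\sigma(X)$) that you invoke directly on $\mathbb{P}[\Eps\in A\given Y,X]$ --- but the mathematical content is identical, so there is no gap in either direction.
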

\begin{proof}
    First, recall that $\mu^{\mathrm{\star}}_{\Eps|X} = \delta_{\Eps = M(X)}$ implies that $\Eps$ is independent of $\Eta$ and that $\Eps$ and $Y$ are independent, given $X$.
    To see this, note that
    \begin{align}
        \pi^{\mathrm{\star}}_{\Eps,Y| X}(\Eps,y\given x)
        &= \frac{\pi^{\mathrm{\star}}_{\Eps,Y,X}(\varepsilon,y,x)}{\pi^{\mathrm{\star}}_{X}(x)} \\
        &= \frac{\pi^{\mathrm{\star}}_{\Eps,X}(\varepsilon,x)}{\pi^{\mathrm{\star}}_{X}(x)}
        \frac{\pi^{\mathrm{\star}}_{\Eps,Y,X}(\varepsilon,y,x)}{\pi^{\mathrm{\star}}_{\Eps,X}(\varepsilon,x)} \\
        &= \pi^{\mathrm{\star}}_{\Eps| X}(\varepsilon\given x)
        \pi^{\mathrm{\star}}_{Y| \Eps,X}(y\given \varepsilon,x) \\
        &= \pi^{\mathrm{\star}}_{\Eps| X}(\varepsilon\given x)
        \pi^{\mathrm{\star}}_{Y| X}(y\given x)~,
    \end{align}
    where the last equality follows since $\Eps = M(X)$.\footnote{Since $\Eps = M(X)$, the generated $\sigma$-algebras satisfy $\sigma(\Eps, X) = \sigma(X)$.
    The conditional probability given $X$ and $\Eps$ thus satisfies $\mathbb{P}(A\given X,\Eps) = \mathbb{E}[1_{A}|X,\Eps] = \mathbb{E}[1_A|X] = \mathbb{P}(A\given X)$ for any measurable set $A\in\sigma(Y)$, implying $\mu_{Y|X,\Eps} = \mu_{Y|X}$.}
    
    With this in hand, it is easy to see that $\Eps$ is independent of $Y$, given $X$, i.e.
    $$
        \pi^{\mathrm{\star}}_{\Eps|Y,X}(\varepsilon\given y,x)
        = \frac{\pi^{\mathrm{\star}}_{\Eps,Y|X}(\varepsilon, y\given x)}{\pi^{\mathrm{\star}}_{Y|X}(y\given x)}
        = \pi^{\mathrm{\star}}_{\Eps|X}(\varepsilon\given x)~.
    $$
    Finally, this implies
    \begin{align}
        \pi^{\mathrm{\star}}_{\Eps|Y}(\varepsilon\given y)
        &= \int_{\mathcal{X}} \pi^{\mathrm{\star}}_{\Eps|Y,X}(\varepsilon\given y, x) \pi^{\mathrm{\star}}_{X|Y}(x\given y) \dd x \\
        &= \int_{\mathcal{X}} \pi^{\mathrm{\star}}_{\Eps|X}(\varepsilon\given x) \pi^{\mathrm{\star}}_{X|Y}(x\given y) \dd x \\
        &= \int_{\mathcal{X}} \delta_{\Eps=M(x)} \pi^{\mathrm{\star}}_{X|Y}(x\given y) \dd x \\
        &= M_\# \pi^{\mathrm{\star}}_{X|Y}(\varepsilon\given y)
    \end{align}
    and concludes the proof.
\end{proof}

To motivate the final simplifying assumption for our choice of $\mu_{\Eps}$, we note that according to the Doob--Dynkin lemma, any model choice where $\Eps$ does depend on $X$ implies $\Eps = h(X)$, for some (potentially random) function $h$.\footnote{Note that a deterministic $h$ can be represented by the most informative prior possible, a Dirac measure.}
In order to adequately describe $h$, additional information about the properties of $\Eps$ is needed.
This can be in the form of a prior for $h$, such as Gaussian processes $h\sim \mathcal{GP}(0,K)$~\cite{Kennedy2001_Bayesiancalibrationcomputer,Brynjarsdottir2014_Learningphysicalparameters}, or in the form of a model class characterizing the behaviour of $h$.
The accuracy of the prediction strongly depends on these choices (cf.~\cite{Bai2024_GaussianprocessesBayesian} for $h\sim\mathcal{GP}(0,K)$).
This choice introduces yet another potential source of systematic errors.
Due to its high computational complexity, we have to disregard the optimal error model $h=M$.
We thus have to find an efficiently computable error model for which the resulting bias is acceptably low.\footnote{We appreciate the irony that doing so introduces just a different form of systematic error.}
Here we note that choosing a parametric model for $h$ increases the dimension of the parameter space, which in turn deteriorates the convergence of the posterior~\cite{Vaart2004_Vitessesdeconvergence}.
Hence, using a parametric model for $h$ requires more information in form of more measurements, to compensate for the larger parameter dimension.
Since precise information about the model error are rarely known, and increasing the number of measurements can be very expensive, we decide to impose the following assumption on our model.
\begin{assumpt}
    \label{assumpt:methods:model_error:epsilon_independent_of_X}
    The systematic error $\Eps\sim\mu_{\Eps}$ is independent of $X$, i.e.\ $\mu_{\eps|X} = \mu_\eps$.
\end{assumpt}
This assumption interpolates between the most accurate model $\mu_{\Eps|X} = \delta_{\Eps = M(X)}$, which is not independent of $X$, and the simplest choice $\mu_{\Eps} = \delta_{\Eps=0}$, which is independent of $X$.
This independence leads to a black-box setting, where we do not have to make any adaptations to the model error for different applications.
An overview of existing models together with required information and assumptions is given in Table~\ref{tab:methods:model_error:table_models_overview}.
As visualized in Table~\ref{tab:methods:model_error:modelling_assumptions}, we assume for our model choice the two structural properties of the joint distribution $\mu_{Y,X,\Eps,\Eta}^{\star}$ given by Assumptions~\ref{assumpt:methods:model_error:eps_givenX_indep_of_noise_meas} and~\ref{assumpt:methods:model_error:eps_givenY_defined_by_M_posterior}, and the ``black-box''--assumption in Assumption~\ref{assumpt:methods:model_error:epsilon_independent_of_X}.
Then the model error distribution can be characterised in the following way. %

\begin{theorem}
\label{thm:methods:model_error:model_error_distribution_follow_from_assumptions}
    Assumptions~\ref{assumpt:methods:model_error:eps_givenX_indep_of_noise_meas},~\ref{assumpt:methods:model_error:eps_givenY_defined_by_M_posterior} and~\ref{assumpt:methods:model_error:epsilon_independent_of_X} imply
    \begin{equation}
    \label{eq:methods:model_error:model_error_distribution_as_pushforward_under_M}
        \pi_{\Eps}(\varepsilon) = M_\# \pi_{X|Y}(\varepsilon\given y) .
    \end{equation}
\end{theorem}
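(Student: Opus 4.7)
The plan is to combine the three assumptions to conclude that, on the one hand, $\Eps$ is unconditionally independent of $Y$ (so that $\pi_{\Eps|Y}$ collapses to $\pi_\Eps$), while on the other hand Assumption~\ref{assumpt:methods:model_error:eps_givenY_defined_by_M_posterior} identifies $\pi_{\Eps|Y}$ with the pushforward $M_\#\pi_{X|Y}$. Equating these two expressions for $\pi_{\Eps|Y}$ yields the claim.

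First I would chain Assumptions~\ref{assumpt:methods:model_error:eps_givenX_indep_of_noise_meas} and~\ref{assumpt:methods:model_error:epsilon_independent_of_X} to observe that
\begin{equation}
    \pi_{\Eps|Y,X}(\varepsilon\given y,x) = \pi_{\Eps|X}(\varepsilon\given x) = \pi_{\Eps}(\varepsilon),
\end{equation}
so the joint conditional density of $\Eps$ given $(Y,X)$ is in fact $X$- and $Y$-free. This is the crucial step: it says that once we impose both the structural property inherited from the exact model and the black-box independence assumption, the systematic error loses all dependence on the data and the parameter.

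Next I would marginalise $X$ out of $\pi_{\Eps,X|Y}$ against the posterior $\pi_{X|Y}$ to compute $\pi_{\Eps|Y}$:
\begin{equation}
    \pi_{\Eps|Y}(\varepsilon\given y)
    = \int_{\mathcal{X}} \pi_{\Eps|Y,X}(\varepsilon\given y,x)\,\pi_{X|Y}(x\given y)\dd x
    = \pi_{\Eps}(\varepsilon)\int_{\mathcal{X}} \pi_{X|Y}(x\given y)\dd x
    = \pi_{\Eps}(\varepsilon),
\end{equation}
where the first identity uses the standard disintegration of conditional densities and the second substitutes the constant-in-$x$ expression obtained above. This already establishes $\pi_{\Eps|Y} = \pi_{\Eps}$.

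Finally I would invoke Assumption~\ref{assumpt:methods:model_error:eps_givenY_defined_by_M_posterior}, which states $\pi_{\Eps|Y}(\varepsilon\given y) = M_\#\pi_{X|Y}(\varepsilon\given y)$, and combine it with the previous display to conclude
\begin{equation}
    \pi_{\Eps}(\varepsilon) = M_\#\pi_{X|Y}(\varepsilon\given y),
\end{equation}
which is~\eqref{eq:methods:model_error:model_error_distribution_as_pushforward_under_M}. I do not expect any serious obstacle: the argument is essentially bookkeeping with conditional densities, and the only subtlety is noting that the right-hand side is, a posteriori, independent of $y$ even though $\pi_{X|Y}(\cdot\given y)$ is not — a consistency condition forced on the model by the combination of Assumptions~\ref{assumpt:methods:model_error:eps_givenX_indep_of_noise_meas}–\ref{assumpt:methods:model_error:epsilon_independent_of_X} and worth flagging briefly in the proof.
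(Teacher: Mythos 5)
Your proof is correct and follows essentially the same route as the paper's: combine Assumptions~\ref{assumpt:methods:model_error:eps_givenX_indep_of_noise_meas} and~\ref{assumpt:methods:model_error:epsilon_independent_of_X} to get $\pi_{\Eps|Y,X}=\pi_{\Eps}$, marginalise over $\pi_{X|Y}$ to obtain $\pi_{\Eps|Y}=\pi_{\Eps}$, and then apply Assumption~\ref{assumpt:methods:model_error:eps_givenY_defined_by_M_posterior}. Your closing remark about the $y$-independence of the right-hand side being a consistency condition imposed on the model is a worthwhile observation not made explicit in the paper, but it does not change the argument.
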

\begin{proof}
    Recall that Assumptions~\ref{assumpt:methods:model_error:eps_givenX_indep_of_noise_meas} and~\ref{assumpt:methods:model_error:epsilon_independent_of_X} require $\mu_{\Eps|Y,X} = \mu_{\Eps|X} = \mu_\Eps$ and, hence,
    \begin{align}
        \pi_{\Eps|Y}(\varepsilon\given y)
        &= \int_{\mathcal{X}} \pi_{\Eps|Y,X}(\varepsilon\given y, x) \pi_{X|Y}(x\given y) \dd x \\
        &= \pi_{\Eps}(\varepsilon) \int_{\mathcal{X}} \pi_{X|Y}(x\given y) \dd x \\
        &= \pi_{\Eps}(\varepsilon)~.
    \end{align}
    Assumption~\ref{assumpt:methods:model_error:eps_givenY_defined_by_M_posterior} ($\mu_{\Eps|Y} = M_\# \mu_{X|Y}$) thus implies $\pi_{\Eps}(\varepsilon) = M_\# \pi_{X|Y}(\varepsilon\given y)$.
\end{proof}

\begin{remark}
\label{rmk:mehtods:model_error:true_prediction_for_M_linear}
    One should keep in mind, that assuming $\Eps\sim\mu_{\Eps}$ to be independent of $X$ generally leads to a wrong prediction, since
    $$
        \mathbb{E}[Y | X] = f(X) + \mathbb{E}[\Eps] \ne F(X)
    $$
    for generic $\mu_{\Eps}$. %
    However, for the distribution $\mu_\Eps$ derived in~\eqref{eq:methods:model_error:model_error_distribution_as_pushforward_under_M}, we can show that the expectation is correct when $F$ and $f$ are affine.
    Defining $X^{\star} := \mathbb{E}[X\given Y]$, it holds that 
    \begin{equation}
    \label{eq:methods:model_error:expectation_value_of_epsilon_given_Y_affine_case}
        \mathbb{E}[\Eps\given Y] = \mathbb{E}[M(X)\given Y] = M(X^{\star})~.
    \end{equation}
    From this follows that $\mathbb{E}[Y\given X=X^{\star}] = f(X^{\star}) + M(X^{\star}) = F(X^{\star})$.
    Although most forward models are not linear, in many cases we can consider a local linearisation in the region where the posterior concentrates.
    This local linearisation is warranted if the posterior concentrates well enough, as illustrated in Figure~\ref{fig:methods:model_error:prediction_of_Y_for local_linearisation}.
    Here we use Toylor polynomials of first order for the linearisation of the forward models
    $F(x) = \mathcal{T}_{F,X^{\star}} + \mathcal{R}^1_{F,X^{\star}}$ and $f(x) = \mathcal{T}_{f,X^{\star}} + \mathcal{R}^1_{f,X^{\star}}$.
    Let $I \subseteq \mathcal{X}$ be the interval the posterior concentrates in, containing posterior samples with the probability $\mathbb{P}\left[ X \in I \given Y \right] = 1 - \delta_p$.
    If the interval is sufficiently small, we can assume that the forward models are well approximated by the linearisations, i.e.\ that 
    $\mathcal{R}^1_{f,X^{\star}}\le\delta_f$ and $\mathcal{R}^1_{F,X^{\star}}\le\delta_F$ in $I$.
    Then
    \begin{align}
        \mathbb{E}[F(X)\given Y]
        &= \mathbb{E}[\mathcal{T}^1_{F,X^{\star}}(X)\given Y] + \mathbb{E}[\mathcal{R}^1_{F,X^{\star}}(X)\given Y] \\
        &= F(X^{\star}) + \mathcal{O}(\delta_F(1-\delta_p) + \delta_p) \\
        &\approx F(X^{\star}) .
    \end{align}
    
    and, similarly, $\mathbb{E}[f(X)\given Y] \approx f(X^{\star})$.
    Consequently,
    $
        \mathbb{E}[\Eps\given Y,X]
        = \mathbb{E}[\Eps\given Y] \approx M(X^{\star})
    $
    and, by the tower property, we get for all $x\in I$ that
    \begin{align}
        \mathbb{E}[f(X) + \Eps\given X=x]
        &\approx f(x) + M(X^{\star}) \\
        &= f(x) + M(x) + \mathcal{O}(|I| + \delta_f + \delta_F) 
        \\
        &\approx F(x)~,
    \end{align}
    using $M(X^{\star}) = M(x) - (\partial_x M)(X^{\star}) \left( x - X^{\star}\right) - \mathcal{R}_{M,X^{\star}}(x)$.
    
    In particular, the error-corrected model $f(X) + \Eps$ converges, in first order, to the exact model $F(X)$ as the observation $Y$ becomes more informative.\footnote{Recall that ``more informative'' does not necessitate that more measurements are taken.}
    This means the posterior $\mu_{X|Y}$ converges even though the incorrect forward model $f$ is used.
    In this way, our model provides at least an approximation to the true posterior.
    This is illustrated numerically in Figure~\ref{fig:numerical_experiments:OHagan:more_accuracy_for_more_measurements}.
\end{remark}
\begin{figure}[ht]
    \centering
    \begin{tikzpicture}
     \node[anchor=south west,inner sep=0] (image_local_model_error) at (0,0) {\includegraphics[clip, trim=0.5cm 40cm 0.5cm 3.5cm, width=0.75\linewidth]{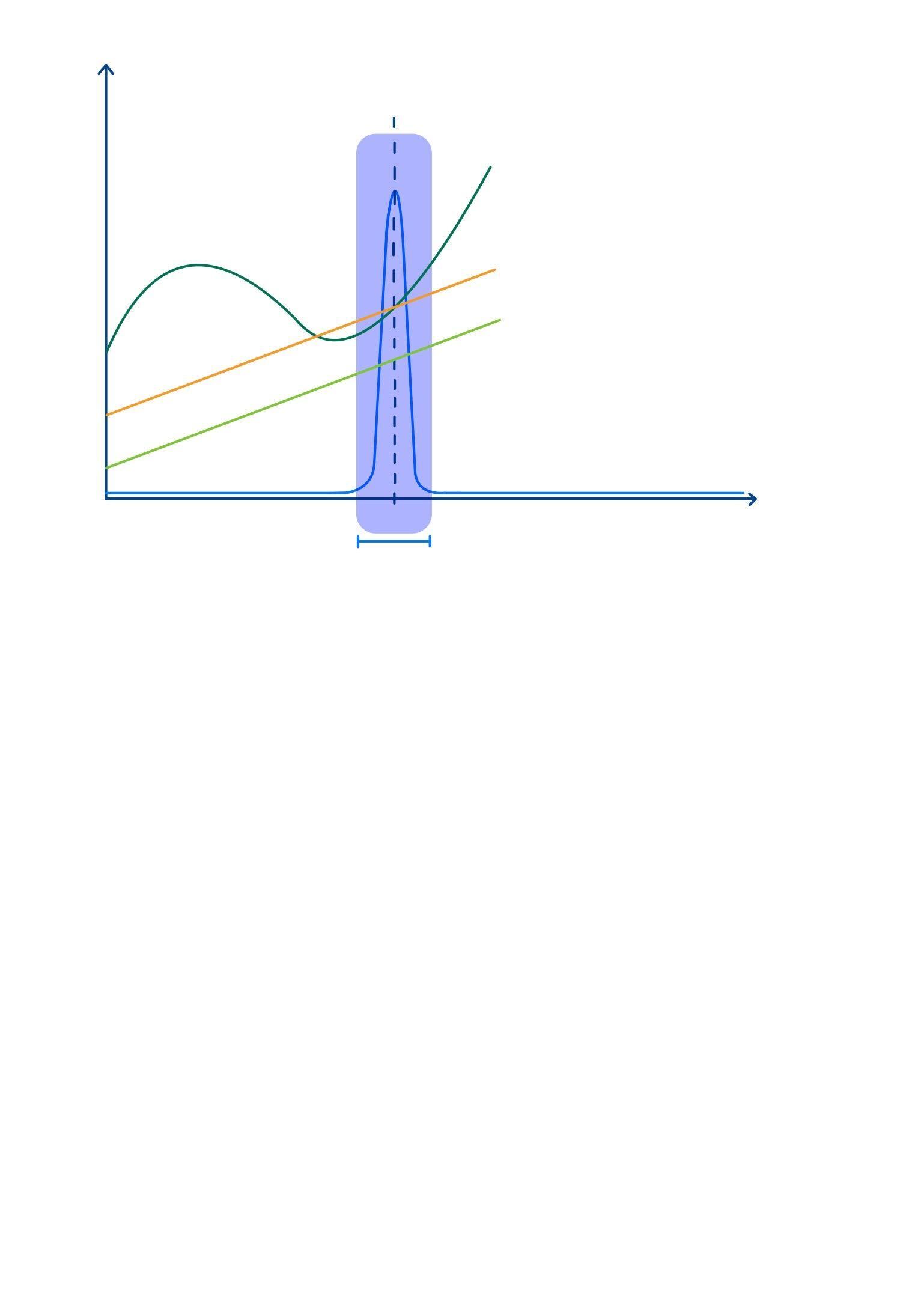}};
     \begin{scope}[x={(image_local_model_error.south east)},y={(image_local_model_error.north west)}]
            \node[text = DarkGreen, font=\bfseries] at (0.77,0.85) {\small$\boldsymbol{ F(X) = \mathcal{T}_{F,X^{\star}}^1(X) + \mathcal{R}_{F,X^{\star}}^1(X)}$};
            \node[text = QumphyOrange, font=\bfseries] at (0.79,0.66) {\small$\boldsymbol{ f(X) + M(X) \approx \mathbb{E}[ f(X) + \Eps \given X ] }$};
            \node[text = PythiaGreen, font=\bfseries] at (0.77,0.56) {\small$\boldsymbol{ f(X) = \mathcal{T}_{f,X^{\star}}^1(X) + \mathcal{R}_{f,X^{\star}}^1(X) }$};
            \node[text = NavyBlue, font=\bfseries] at (0.77,0.27) {\small $\boldsymbol{ \pi_{X \given Y = y^{\star}}}$};
            \node[text = NavyBlue, font=\bfseries] at (0.425,0.12) {\small $\boldsymbol{I}$};
            \node[text = NavyBlue, font=\bfseries] at (0.65,0.12) {\small $\boldsymbol{\mathbb{P}[X \in I \given Y = y^{\star} ]}$};
            \node[text = MidnightBlue, font=\bfseries] at (0.43,0.95) {\small $\boldsymbol{X^{\star}}$};
     \end{scope}
     \end{tikzpicture}
    \caption[Locally error-corrected model]{
    The posterior distribution $\pi_{X \given Y}$ shown here for a given observation $y^{\star}$, is concentrated in a small intervall $I$. 
    Hence the exact forward models $F$ (dark green) and the reduced one $f$ (light green) can be well approximated here by their Taylor polynomials $\mathcal{T}^1_{F,X^{\star}}$ and $\mathcal{T}^1_{f,X^{\star}}$ centered at the mean value $X^{\star}\coloneqq \mathbb{E}[ X \given Y]$ .
    The orange line indicates the reduced forward model $f(X)$ shifted by the model error prediction $M(X^{\star})$ resulting from the model error approach in~\eqref{eq:methods:model_error:model_error_distribution_as_pushforward_under_M}. 
    This provides locally a good approaximation of the exact forward model $F(X)$ in $I$, which is discussed in detail in remark~\ref{rmk:mehtods:model_error:true_prediction_for_M_linear}.
    }
    \label{fig:methods:model_error:prediction_of_Y_for local_linearisation}
\end{figure}

Using equations \eqref{eq:methods:model_error:inverse_problem_with_model_error} and \eqref{eq:methods:model_error:model_error_distribution_as_pushforward_under_M}, we can now define the Bayesian posterior including the model error.
The conditional likelihood
\begin{equation}%
    \label{eq:methods:model_error:def_conditional_likelihood}
    \pi_{Y|X,\Eps}(y\given x,\eps) = \pi_{\Eta}(y-f(x)-\eps)
\end{equation}
describes the distribution of $Y$ conditioned on $X$ and $\Eps$.
Using the change-of-variables formula~\cite[Theorem 1.6.12]{Ash1972_Realanalysisprobability} the likelihood
$\pi_{Y|X}(y\given x)$ can thus be written as the marginal
\begin{align}%
    \pi_{Y|X} (y\given x)
    &= \int_{M(\mathcal{X})} \pi_{\Eta}(y-f(x)-\eps) \dd\mu_{\eps}(\eps) \\
    &= \int_{M(\mathcal{X})} \pi_{\Eta}(y-f(x)-\eps) \dd\!\left(M_{\#} \mu_{X|Y}\right)(\eps\given y) \\
    & = \int_{\mathcal{X}} \pi_{\Eta} \big(y-f(x)-M(z) \big) \dd\mu_{X|Y}(z\given y) 
    \\
    & = \int_{\mathcal{X}} \pi_{\Eta} \big(y-f(x)-M(z) \big) \pi_{X|Y}(z\given y) \dd z ~.
\label{eq:methods:model_error:def_piM_depending_true_posterior}
\end{align}
Consequently, the resulting Bayes posterior is given by the fixed-point equation
\begin{equation}
\label{eq:methods:model_error:posterior_ModelError}
	\pi_{X|Y}(x\given y) = \frac{1}{\pi_Y(y)} \int_{\mathcal{X}} \pi_{\Eta} \big(y-f(x)-M(z) \big) \pi_{X|Y}(z\given y)\dx{z} ~\pi_X(x)~,
\end{equation}
with $\pi_{Y}(y) = \int_{\mathcal{X}} \int_{\mathcal{X}} \pi_{\Eta} \big(y-f(x)-M(z) \big) \pi_{X|Y}(z\given y) \pi_X(x) \dd z \dd x$.
Solving this fixed-point equation iteratively as in~\cite{Calvetti2018_Iterativeupdatingmodel} leads to an iterative update of the model error distribution for each update of the posterior.

\subsection{Transport maps}%
\label{subsec:methods:transport_maps}

In~\cite{Calvetti2018_Iterativeupdatingmodel} the posterior samples generated by~\eqref{eq:methods:model_error:posterior_ModelError} are generated using MCMC methods.
Although MCMC is generally very reliable, it can be very time-intensive.
To alleviate this issue, we propose a transport map-based reformulation of the iteration proposed in~\cite{Calvetti2018_Iterativeupdatingmodel}.
The subsequent section briefly explains the necessary concepts.

Given two measures, $\muref$ and $\mutar$ on $\left(\mathcal{X}, \mathcal{B}\right)$, a \emph{transport map}~\cite{Villani2009_Optimaltransportold,Marzouk2016_introductionsamplingvia,Moselhy2012_BayesianInferenceOptimal,Santambrogio2015_OptimalTransportApplied} from the \emph{reference measure} $\muref$ to the \emph{target measure} $\mu_{\text{tar}}$ is a measurable function $T \colon \mathcal{X} \to \mathcal{X}$ satisfying
\begin{equation}
\label{eq:methods:transport_maps:def_measure_transport}
    \mutar = T_{\#} \muref \coloneqq \muref \circ T^{-1}~.
\end{equation}

Although transport maps can be defined in a much more general setting (cf.~Remark~\ref{rmk:methods:transport_maps:existence_uniqueness}), we constrain ourselves to the setting where the densities $\piref$ and $\pitar$ are continuously differentiable.
Then there exists a transport $T$, namely the Knothe--Rosenblatt transport~(KR~transport) discussed in~Remark~\ref{rmk:methods:transport_maps:KR_transport}, which is a diffeomorphism and satisfies the \emph{transport of the densities} given by
\begin{equation}
\label{eq:methods:transport_maps:def_density_transport}
    \pitar = T_{\#} \piref \coloneqq \piref \circ T^{-1} \, \vert\det \grad T^{-1}\vert
\end{equation}
and similarly for the reverse transport $\muref = T^{-1}_{\#} \mutar$.
Note that, similar to~\cite{Marzouk2016_introductionsamplingvia}, we may additionally assume $T$ to be \emph{triangular}, i.e.\ to have the structure
\begin{equation}
\label{eq:methods:transport_maps:def_triangular_map}
    T(x_1, \ldots ,  x_n) = 
    \begin{pmatrix}
    T_1 (x_1) \\
    T_2 (x_1,x_2) \\
    \vdots \\
    T_n (x_1, \ldots, x_n)
    \end{pmatrix}~.
\end{equation}
This triangularity assumption is motivated by the desire to simplify the computation of $T_{\#} \piref$, as in this case
\begin{equation}
    \label{eq:methods:transport_maps:det_term_for_triangular_map}
    \det \nabla T = \prod_{i=1}^d \frac{\partial T_i}{\partial x_i}~.
\end{equation}

\begin{remark}[Knothe--Rosenblatt transport (KR transport)]
\label{rmk:methods:transport_maps:KR_transport}
    An often discussed measure transport is the Knothe--Rosenblatt transport~\cite{Knothe1957_Contributionstheoryconvex,Rosenblatt1952_RemarksMultivariateTransformation} which can be explicitly constructed and exhibits desirable properties, discussed in~\cite{Villani2009_Optimaltransportold,Santambrogio2015_OptimalTransportApplied,Zech2022_SparseApproximationTriangular,Marzouk2016_introductionsamplingvia}.
    It is monotone for lexicographic order\footnote{
    Here we stick to the definition used in~\cite{Santambrogio2015_OptimalTransportApplied}: A function $f\colon \mathbb{R}^d \to \mathbb{R}^d$ is monotone for lexicographic order if for each $j \in \{1,\ldots, d\}$ it holds, that $f_j$ is monotone in the $j\text{-th}$ component $x_j$.}
    and has the triangular structure illustrated in~\eqref{eq:methods:transport_maps:def_triangular_map}.
    Moreover, the KR transport $T$ has the same regularity as the densities $\piref$ and $\pitar$ (see~\cite[Remark~2.19]{Santambrogio2015_OptimalTransportApplied}).
    Hence, if $\piref$ and $\pitar$ are continuously differentiable, the KR-transport $T$ is a diffeomorphism.
    Even though the KR-transport can be explicitly constructed, it is very time-consuming and infeasible for higher dimensions. 
    Therefore, we rather use a numerical approximation of the KR-transport. 
\end{remark}
\begin{remark}[Existence and uniqueness of transport maps]
\label{rmk:methods:transport_maps:existence_uniqueness}
    In order to define a measure transport, at least the reference measure $\muref$ has to be atomless~\cite{Villani2009_Optimaltransportold}.
    Since we assume all measures to be absolutely continuous with respect to the Lebesgue measure, there does always exist a transport map and its reverse transport in our setting.
    To obtain uniqueness, we need to restrict the set of transport maps we consider, as for example in~\cite{Moselhy2012_BayesianInferenceOptimal,Santambrogio2015_OptimalTransportApplied}.
    In our case we choose a set, which in general does not provide a unique solution.
    As we are not looking for the optimal transport but any transport map satisfying~\eqref{eq:methods:transport_maps:def_measure_transport}, uniqueness is not necessarily required.
    As a consequence the algorithm may not return the same transport map in every execution, but as we are not primarily interested in the transport map but in the resulting distribution, we only have to ensure, that $T_{\#} \piref$ is close enough to $\pitar$.
\end{remark}

In the following section, we only consider transport maps from the model class
\begin{equation}
    \label{eq:methods:transport_maps:def_model_class_for_TM}
    \mathcal{T} \coloneqq \{T \colon \mathcal{X} \to \mathcal{X} \,\vert\, ~T \text{ triangular diffeomorphism}\}
\end{equation}
This set has beneficial analytical properties due to its smoothness and the triangular structure, which makes the computation of $\det \nabla T$ easy.
Furthermore, the KR transport is contained in $\mathcal{T}$, which ensures that at least one solution of \eqref{eq:methods:transport_maps:def_measure_transport} lies within this model class.
In order to avoid further restrictions on the model class, we dispense with the uniqueness of the transport map, as discussed in Remark~\ref{rmk:methods:transport_maps:existence_uniqueness}.

\section{Method}\label{sec:algorithm}
In contrast to the MCMC method used in~\cite{Calvetti2018_Iterativeupdatingmodel}, we propose to construct a transport map between a fixed, easy-to-sample-from density $\piref$ and the target posterior $\pitar = \pi_{X|Y}(\smallbullet\,|\, y)$ for a fixed observation $y$.
Since the explicit construction of a transport map might be very involved or even inaccessible~\cite{Carlier2008_KnothestransportBreniers,Moselhy2012_BayesianInferenceOptimal,Santambrogio2015_OptimalTransportApplied}, we aim for an approximation of the transport map by minimising
the discrepancy between $T_{\#} \piref$ and $\pitar$ as measured by the \emph{Kullback--Leibler divergence} (KL divergence)
\begin{equation}%
    \label{eq:methods:transport_maps:definition_KL-divergence}
    \Dkl\left(T_{\#}\piref \Vert \pitar \right) 
    = \mathbb{E}_{X\sim T_{\#}\piref}\Bigg[\ln\frac{T_{\#}\piref(X)}{\pitar(X)}\Bigg] 
    = \int_{\mathcal{X}} \ln \Bigl( \frac{T_{\#}\piref(x)}{\pitar(x)}\Bigr) \, T_{\#}\piref(x) \dd x.
\end{equation}
Although the KL divergence is not a metric, it attains its minimum $\Dkl(T_{\#}\piref\Vert\pitar) = 0$ if and only if $T_{\#}\piref \equiv \pitar$ $\mu_{\mathrm{ref}}$-almost everywhere~\cite{Kullback1951_InformationSufficiency}.
We therefore solve the minimisation problem
\begin{equation}
\label{eq:methods:transport_maps:optimization_problem_for_tranport_maps}
    \min_{T \in \mathcal{T}}  \Dkl(T_{\#}\piref \Vert \pitar), 
\end{equation}
with the model class $\mathcal{T}$ as in~\eqref{eq:methods:transport_maps:def_model_class_for_TM}.
\subsection{Optimisation}%
\label{subsec:algorithm:optimization}

For a general posterior density, $\pi_{X|Y}(\smallbullet \vert y)$ we want to recall the simplifications made in~\cite{Marzouk2016_introductionsamplingvia} to rewrite the optimisation~\eqref{eq:methods:transport_maps:optimization_problem_for_tranport_maps} as
\begin{align}
    &\ \min_{T \in \mathcal{T}} \Dkl \left(T_{\#} \piref \,\vert\, \pitar \right) \\
    =&\ \min_{T \in \mathcal{T}} \Dkl \left(T_{\#} \piref \,\vert\, \pi_{X|Y}(\smallbullet \given y) \right) \\
    =&\ \min_{T \in \mathcal{T}} \Dkl ( \piref \,\vert\, T^{-1}_{\#}\pi_{X|Y}(\smallbullet \given y) ) \\
    =&\ \min_{T \in \mathcal{T}} \mathbb{E}_{X \sim \piref} \Bigg[
        - \ln \pi_{Y|X}(y\given T(X))
        - \ln \pi_X(T(X)) 
        - \sum_k \ln \frac{\partial T_k}{\partial x_k}(X)
    \Bigg]
    ~.
\label{eq:algorithm:optimization:rewrite_Kl_divergence}
\end{align}

Using the KR transport $T^*$ from $\piref$ to the target density $\pi_{X\given Y}(\smallbullet,y)$ and~\eqref{eq:methods:model_error:def_piM_depending_true_posterior}, the log-likelihood in~\eqref{eq:algorithm:optimization:rewrite_Kl_divergence} reads
\begin{align}
    \ln \pi_{Y|X}(y\given T(x))
    &= \ln \int_{\mathcal{X}} \pi_{\eta} \big(y-f\left(T(x)\right)-M(z)\big) \pi_{X|Y}(z\given y) \dd z \\
    &= \ln \int_{\mathcal{X}} \pi_{\eta} \big(y-f\left(T(x)\right)-M(z)\big) T^*_\#\piref(z) \dd z\\
    &= \ln \int_{\mathcal{X}} \pi_{\eta} \big(y-f\left(T(x)\right)-M(T^*(z))\big)\piref(z) \dd z\\
    &= \ln \mathbb{E}_{Z\sim \piref}\Bigg[
        \pi_{\eta} \big(y-f\left(T(x)\right)-M(T^*(Z)) \big)
    \Bigg]~.
\label{eq:algorithm:optimization:log_likelihood_model_error_case}
\end{align}
Note that the expectation value in~\eqref{eq:algorithm:optimization:log_likelihood_model_error_case} is now taken over the reference density $\piref$, instead of the posterior. This provides a simplification of the optimisation problem~\eqref{eq:algorithm:optimization:rewrite_Kl_divergence} to
\begin{equation}
    \min_{T \in \mathcal{T}} \Dkl \left(T_{\#} \piref \,\vert\, \pitar \right)
    = \min_{T \in \mathcal{T}}  L(T, T^*)
\end{equation}
with the loss function
\begin{equation}
\label{eq:algorithm:optimization:loss_function_general}
\begin{aligned}
    L(T, T^*)
    \coloneqq
    &\mathbb{E}_{X \sim \piref} \Bigg[
        - \ln \mathbb{E}_{Z\sim \piref}\Bigg[
            \pi_{\eta} \big(y-f\left(T(X)\right)-M(T^*(Z))\big)
        \Bigg]\Bigg] \\
    & + \mathbb{E}_{X \sim \piref} \Bigg[
        - \ln \pi_X(T(X))
    - \sum_k \ln \frac{\partial T_k}{\partial x_k}(X)
    \Bigg]
    ~.
\end{aligned}
\end{equation}

This leads to the fixed point equation
\begin{equation}
\label{eq:algorithm:optimization:fix_point_equation}
    T^* \in \operatorname*{arg\,min}_{T\in\mathcal{T}} L(T, T^*) .
\end{equation}
Similar to~\cite{Calvetti2018_Iterativeupdatingmodel}, we solve the fixed point problem~\eqref{eq:algorithm:optimization:fix_point_equation} iteratively by defining the sequence $\left( T_l\right)_{l\in\mathbb{N}} \subseteq \mathcal{T}$ via
\begin{equation}
\label{eq:algorithm:optimization:interative_definition_of_T_l}
    T_{\ell+1} \in \operatorname*{arg\,min}_{T\in\mathcal{T}} L(T, T_{\ell})~,
\end{equation}
with a suitable, e.g.\ random, initialisation $T_0$.
We terminate the iteration when
\begin{equation}
\label{eq:algorithm:optimization:termination_condition_algorithm}
    \vert L(T_{l+1},T_{l}) - L(T_{l},T_{l-1}) \vert \leq \delta, 
\end{equation}
for some fixed $\delta > 0$~.

Note that, due to the special structure of the log-likelihood term~\eqref{eq:algorithm:optimization:log_likelihood_model_error_case}, the loss~\eqref{eq:algorithm:optimization:loss_function_general} contains two nested expectation values.
In the remainder of this section, we discuss two different methods to estimate this term.

\paragraph{The Jensen-loss}
The first idea is to derive an upper bound for the negative log-likelihood term in~\eqref{eq:algorithm:optimization:loss_function_general}, which is easier to optimise numerically.
Such an upper bound to optimise the loss function is often used in machine learning algorithms.
As the problem is equivalent to seeking a lower bound of the log-likelihood, this estimate is typically referred to as the \emph{evidence lower bound (ELBO)}~\cite{Kingma2014_AutoEncodingVariational,Kingma2303_UnderstandingDiffusionObjectives}.
Using Jensen's inequality~\cite{Koenigsberger2004_Analysis1}, the negative log-likelihood can be bounded by 

\begin{align}
\label{eq:algorithm:optimization:jensen_inequality_for_log_likelihood}
    -\ln \pi_{Y|X}(y\given T(x)) 
    \le - \int_{\mathcal{X}} \ln \pi_{\eta} \Big(y- f(T(x)) - (M(T^*(z)) \Big) \piref(z) \dd z~.
\end{align}
Combining this estimate with the loss~\eqref{eq:algorithm:optimization:loss_function_general} and using an empirical approximation of the integral yields the \emph{Jensen-loss}
\begin{align}
\label{eq:algorithm:optimization:loss_function_jensen}
    L_{\mathrm{J}}(T,T^*,(x_i)_i, (z_i)_i)
    \coloneqq 
    \frac{1}{s}\sum_{i=1}^s \Bigg[
    - \ln \pi_{\eta} \Big(y - f(T(x_i)) - (M(T^*(z_i)) \Big)
    \\
    - \ln \pi_X(T(x_i)) 
    - \sum_k \ln \frac{\partial T_k}{\partial x_k}(x_i) \Bigg] ~,
\end{align}
with $(x_i,z_i)\sim\piref\otimes\piref$ for $i\in\{1,\ldots,s\}$.
This estimator converges with a rate of $\mathcal{O}(s^{-1/2})$.

Note, however, that the minimiser $T_{\mathrm{J}}$ of the (proper) upper bound $L_{\mathrm{J}}$ may be different from the minimiser of the original loss $L$. 
This is illustrated in the subsequent example, inspired by~\cite{Rainforth2018_NestingMonteCarlo}.

\begin{figure}[ht]
    \centering
        \includegraphics[width=\linewidth]{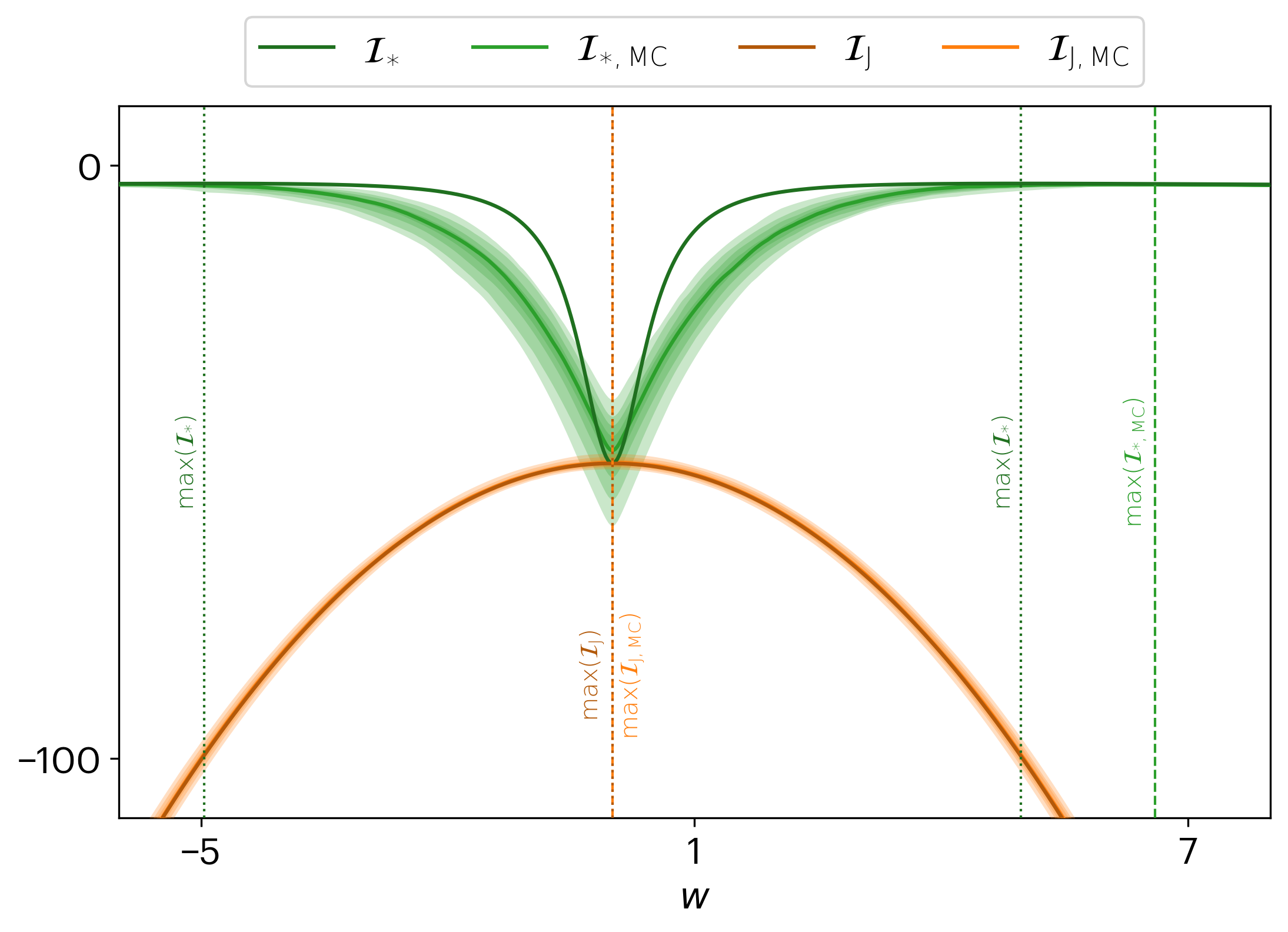}
    \caption{Graphs of the target functions $I_{\mathrm{\ast}}$ and $I_{\mathrm{J}}$ of Example~\ref{ex:algorithm:optimization:bias_of_jensen_loss} with respect to $w \in \mathbb{R}$.
    $I_{\mathrm{J}}$ is estimated by the Monte Carlo estimator $I_{\mathrm{J},\mathrm{MC}}$ using $s=4^6$ i.i.d.\ samples $(x_i,y_i)_{i = 1}^s \sim \pi_X \otimes \pi_Y$.
    $I_{\mathrm{\ast}}$ is estimated by the nested Monte Carlo estimator $I_{\mathrm{\ast},\mathrm{MC}}$ (cf.~\cite{Rainforth2018_NestingMonteCarlo}) with i.i.d.\ samples $(x_i)_{i =1}^m \sim \pi_X$ and $(y_i)_{i=1}^n \sim \pi_Y$ where $m = \lfloor s^{\nicefrac{1}{2}} \rfloor$ and $n = \lfloor s/m \rfloor$.
    The distribution of the estimators was estimated over $100$ runs.
    }
    \label{fig:algorithm:optimization:example_jensen_vs_nmc_limit}
\end{figure}

\begin{example}[Bias of the Jensen-loss]
\label{ex:algorithm:optimization:bias_of_jensen_loss}
    Consider two random variables $Y\sim\pi_Y\coloneqq\mathcal{N}(0,1)$ and $Z\sim\pi_Z\coloneqq\mathcal{N}(0,1)$ and let the potential $\phi_w$ be given by
    \begin{equation*}
        \phi_w(y,z) = \sqrt{ \frac{2}{\pi} } \exp\left( -2 (5y+wz)^2 \right).
    \end{equation*}
    We now consider the respective maxima over $w \in \mathbb{R}$ for the two expressions
    \begin{align}
        I_{\mathrm{\ast}}(w) 
        & = \mathbb{E}_y \Big[\, \ln \mathbb{E}_z \big[\, \phi_w(y,z) \,\big] \,\Big]
        = -\frac{1}{2}\ln \left( (2 w^2 + \frac{1}{2})\pi \right)- \frac{25}{2w^2+ \frac{1}{2}},
        \\
        I_{\mathrm{J}}(w) 
        & = \mathbb{E}_y \Big[\, \mathbb{E}_z \big[\, \ln \phi_w(y,z) \,\big] \, \Big]  = \mathbb{E}_{(z,y)} \left[ 2(y-z)^2 \right]
        = \frac{1}{2} \ln \left(\frac{2}{\pi}\right)-2 (25+w^2),
    \end{align}
    where $I_{\mathrm{J}}$ follows from the Jensen-inequality~\eqref{eq:algorithm:optimization:jensen_inequality_for_log_likelihood} and hence defines a lower bound with \mbox{$I_{\mathrm{\ast}}\geq I_{\mathrm{J}}$}.
    Figure~\ref{fig:algorithm:optimization:example_jensen_vs_nmc_limit} illustrates that the MC-estimates of the respective expectations converge to the correct values, but that those maxima are indeed not attained in the same point.
    Although both MC estimators converge to their corresponding analytic solutions, the estimate $I_{\mathrm{J}, \mathrm{MC}}$ of $I_{\mathrm{J}}$ is unbiased and has a smaller variance than the estimate $I_{\mathrm{\ast},\mathrm{MC}}$ of $I_{\mathrm{\ast}}$.
    Note, however, that the underlying objective is to maximise $I_{\mathrm{\ast}}$ and that the graphs of $I_{\mathrm{\ast},\mathrm{MC}}$ lie much closer to $I_{\mathrm{\ast}}$ than the graphs of $I_{\mathrm{J},\mathrm{MC}}$.
    The figure clearly shows the bias induced by Jensen's inequality, leading to a completely different maximiser.
\end{example}

\paragraph{The nMC-loss}
To avoid the bias introduced by Jensen's inequality, we can directly estimate both expectations in~\eqref{eq:algorithm:optimization:loss_function_general} using Monte--Carlo, yielding a nested Monte--Carlo estimator~\cite{Rainforth2018_NestingMonteCarlo}
\begin{equation}
    \ln \pi_{Y|X}\left(y\given T(x_i) \right)
    \approx \ln \frac{1}{m} \sum_{j = 1}^m  \pi_{\eta}\Big( y - f(T(x_i)) - M\big(T^*(z_{ij})\big) \Big)
\end{equation}
with samples $x_i \sim \piref$ and $z_{ij} \sim \piref$ for $i \in \{1,\ldots,n\}$ and $j \in \{1,\ldots,m\}$.

Defining the potential
\begin{equation}
\label{eq:potential}
    \phi_{T,T^*}(x_i, z_{ij}) \coloneqq \ln\pi_{\eta}\left( y - f(T(x_i)) - M(T^*(z_{ij})) \right)\,,
\end{equation}
the resulting expression contains the logarithm of an empirical mean of an exponential function
\begin{align}
    \ln \pi_{Y|X}\left(y\given T(x_i) \right)
    &\approx \ln \frac{1}{m} \sum_{j = 1}^m  \pi_{\eta}\Big( y - f(T(x_i)) - M\big(T^*(z_{ij})\big) \Big) \\
    &= \ln \sum_{j=1}^m \exp \bigg( \phi_{T,T^*}( x_i, z_{ij} ) \bigg) - \ln m~,
\end{align}
To stabilise this expression, we use the standard \emph{log-sum-exp trick}
$$
    \ln \left(\sum_{j=1}^m \exp(v_j)\right)
    = \ln \left(\sum_{j=1}^m \exp (v_j - {\textstyle \max_j} v_j)\right) + {\textstyle\max_j}v_j
$$
and define the \textit{nested Monte-Carlo loss} (nMC-loss) as
\begin{equation}
\label{eq:algorithm:optimization:def_nmc_loss}
\begin{aligned}
    L_{\mathrm{nMC}}(T,T^*,(x_i)_i, (z_{ij})_{ij}) \coloneqq & \frac{1}{n} \sum_{i=1}^n \Bigg[~
        - \ln \sum_{j=1}^m \Big[~\exp \left( \phi_{T,T^*}(x_i, z_{ij})  - {\textstyle\max_j} \phi_{T,T^*}(x_i, z_{ij}) \right) \\
        &+ {\textstyle\max_j} \phi_{T,T^*}(x_i, z_{ij})~\Big] 
        - \ln \pi_X(T(x_i))
    - \sum_k \ln \frac{\partial T_k}{\partial x_k}(x_i)
    \Bigg]
    - \ln m
    ~.
\end{aligned}
\end{equation}

Note that, although the term $-\ln m$ is not needed for optimisation, it is necessary for convergence of the nested Monte Carlo estimator $L_{\mathrm{nMC}}$ to the original loss $L$.
The total amount of samples required in~\eqref{eq:algorithm:optimization:def_nmc_loss} is $s=nm$.
Assuming $\phi_{T,T^*}$ is thrice differentiable, the optimal convergence rate $\mathcal{O}(s^{-2/3})$ for $L_{\mathrm{nMC}}(T,T^*)$ is attained for $n\propto m^2$ (cf.~\cite{Rainforth2018_NestingMonteCarlo}).
Otherwise, the rate $\mathcal{O}(s^{-1/3})$ is attained for $n\propto m$ (cf.~\cite{Rainforth2018_NestingMonteCarlo}).

\subsection{Empirical iteration}
The fixed point iteration~\eqref{eq:algorithm:optimization:interative_definition_of_T_l} is solved iteratively for the analytic loss function~\eqref{eq:algorithm:optimization:loss_function_general}.
The empirical loss functions $L_{\mathrm{Jensen}}$ and $L_{\mathrm{nMC}}$ depend in addition on the drawn samples $(x_i)_i$ and $(z_{i})_{i}$ respectively $(z_{ij})_{ij}$ for the nMC-loss. 
Hence, there are several ways to define the empirical iteration depending on how the samples are drawn.
Let $\square \in \{\mathrm{J}, \mathrm{nMC}\}$ refer to the different loss functions and let $(x^{(\ell)}_i)_i \sim \piref$ and $(z^{(\ell)}_{ij})_{ij} \sim \piref$ define the used samples in each iteration, with $(z^{(\ell)}_{ij})_{ij} = (z^{(\ell)}_{i1})_{i1}$ for $\square = \mathrm{J}$.
One way to define the iteration is to use new samples in each step, i.e.\ define
\begin{equation}
\label{eq:algorithm:empirical_iteration:iteration_TL_iterative}
    T_{\ell +1} \in \operatorname*{arg\,min}_{T\in\mathcal{T}} L_{\square}(T, T_{\ell}, (x^{(\ell)}_i)_i, (z^{(\ell)}_{ij})_{ij}).
\end{equation} 

Another way is to use the same samples $(z^{(\ell)}_{ij})_{ij}$ for several iterations $\ell+k$, for $k \in \{1, \ldots K\}$, while the samples $(x^{(\ell + k)}_i)_i$ are updated in each iteration.
The corresponding iteration is then given as
\begin{equation}
\label{eq:algorithm:empirical_iteration:iteration_TL_iterative_mod}
    T_{\ell + k } \in \operatorname*{arg\,min}_{T\in\mathcal{T}} L_{\square}(T, T_{\ell}, (x^{(\ell + k)}_i)_i, (z^{(\ell)}_{ij})_{ij}),
\end{equation}
where the optimisation in step $\ell + k + 1$ is initialised by $T_{\ell +k}$.

\subsection{Computational cost}

\begin{center}
\begin{table}[ht]
    \begin{tabular}{|p{3cm}|p{7cm}|p{4.2cm}| }
        \hline
        & & \\[-5pt]
        ~                  &       \textbf{transport map approach}                &    \textbf{MCMC approach}     \\[5pt]
        \hline
        & & \\[-5pt]
        \textbf{Reduced}          &  $\mathcal{O}\big( N s G_{-} + S \big)$  &   $\mathcal{O}\big( S G_{-} \big)$ \\[5pt]
         \hline
         & & \\[-5pt]
         \textbf{Iterative }      &  $\mathcal{O}\big( N s \left( G_{-} + G_{+}  \right) + S \big)$    &   $\mathcal{O}\big( L s \left( G_{+} + G_{-} \right) + SG_-   \big)$   \\[5pt]
         \hline
         & & \\[-5pt]
         \textbf{Iterative$_{\mathbf{mod}}$}  &  $\mathcal{O}\big(L_{mod} s \left( G_- + G_+ \right) + L_{mod}N_{mod} s G_- + S \big)$  &  --- \\[5pt]
         \hline
         & & \\[-5pt]
         \textbf{Full}              & $\mathcal{O}\big( N s G_+ + S \big)$      &   $\mathcal{O} \big( S G_+ \big)$\\[5pt]
         \hline
    \end{tabular}
    \caption{Comparison of the costs between the different methods to generate $S$ independent samples from the posterior.
    We compare the method using the transport map approach presented in this work to the one from~\cite{Calvetti2018_Iterativeupdatingmodel} using an MCMC algorithm and distinguish between the \emph{reduced} model referring to~\eqref{eq:methods:model_error:reduced_inverse_problem}, the \emph{full} model referring to~\eqref{eq:methods:bayesian_inversion:inverse_problem_exact_model} and two versions of iterative models, \emph{Iterative} given by~\eqref{eq:algorithm:optimization:interative_definition_of_T_l} and \emph{$Iterative_{\mathrm{mod}}$} defined in~\eqref{eq:algorithm:empirical_iteration:iteration_TL_iterative_mod}.
    The costs to evaluate the exact and reduced forward models $F$ and $f$ are given by $G_+$ and $G_-$, respectively.
    Furthermore, the number of function evaluations depend on the number of iterations $L_{mod}$ used for the transport map approach or $L$ for the MCMC approach. 
    In each iteration, $s$ defines the number of samples used for the Monte--Carlo approximation in~\eqref{eq:algorithm:optimization:loss_function_jensen} and~\eqref{eq:algorithm:optimization:def_nmc_loss}. 
    For the transport maps approach we have to take into account the number of optimisation steps given by $N$ for the \emph{reduced}, \emph{full} and \emph{iterative} model. 
    For the algorithm \emph{Iterative} the number of optimization steps equals the number of iterations.
    For \emph{$Iterative_{\mathrm{mod}}$}, the number of iterations is given by $L_{\mathrm{mod}}$ and each iteration performs $N_{\mathrm{mod}}$ optimization steps.
    We omit the cost to generate samples from a standard normal distribution as well as the evaluation of the transport maps, since these are negligible.}
    \label{fig:algortithm:compare_cost:table_costs}
    \label{tab:algortithm:compare_cost:table_costs}
\end{table}
\end{center}

To evaluate whether the proposed iterative model actually reduces computation time, we have listed the computational cost for different models in table~\ref{fig:algortithm:compare_cost:table_costs} (compare to \cite[Table~3]{Calvetti2018_Iterativeupdatingmodel}).
The transport map approach presents a clear advantage when it is necessary to generate a large number of samples. 
Since the training of the transport map is done in an offline phase, generating posterior samples in the online phase only takes the time required to draw samples from the reference density and to evaluate the transport map.
In contrast to the MCMC approach, it is not necessary to evaluate any of the two forward models during the online phase~(see table~\ref{fig:algortithm:compare_cost:table_costs}).
This is a general advantage of transport maps and also applies to the full model (\emph{Full}) from~\eqref{eq:methods:bayesian_inversion:inverse_problem_exact_model} using the exact forward model.
Furthermore, as the transport map is updated at each optimisation step, the number of iterations for the iterative model (\emph{Iterative}), as defined in~\eqref{eq:algorithm:empirical_iteration:iteration_TL_iterative}, is equal to the number of optimisation steps $N$.
Hence, for \emph{Iterative} the exact forward model is evaluated just as often as for \emph{Full}.

To avoid this, we propose a modified iterative algorithm \emph{Iterative$_\mathrm{mod}$} as defined in~\eqref{eq:algorithm:empirical_iteration:iteration_TL_iterative_mod}
Here several optimization steps for $k \in \{1, \ldots N_{\mathrm{mod}}\}$ are done for the same model error samples $M(T_l(z^{(\ell)}_i))$ respectively $M(T_l(z^{(\ell)}_{ij}))$.
Hence within these optimization steps the exact forward model has not to be evaluated,
which reduces strongly the number of function evaluations of $F$, as shown in Table~\ref{fig:algortithm:compare_cost:table_costs}.

\section{Numerics}\label{sec:num_experiemnts}

In the subsequent section, we investigate the presented algorithm on two simple yet illustrative problems. 
For the first example, we choose both forward models $F$ and $f$ as affine linear functions, in order to verify the predictions from Remark~\ref{rmk:mehtods:model_error:true_prediction_for_M_linear}.
The second example, first presented by Brynjarsdóttir and O'Hagan~\cite{Brynjarsdottir2014_Learningphysicalparameters}, is a classical and well known benchmark example in the area of Bayesian inference under model errors.
For both examples, we compare the performance of the Jensen-loss~\eqref{eq:algorithm:optimization:loss_function_jensen} and the nMC-loss~\eqref{eq:algorithm:optimization:def_nmc_loss}.
The observational data $Y$ is generated as
\begin{equation}
    \label{eq:numerical_experiemnts:def_synthetic_data}
    y_{\mathrm{syn}} = F(x^{\star}) + \eta~, 
\end{equation}
for $\eta \sim \pi_{\Eta}$ and a fixed ``true'' parameter value $x^{\star} \in \mathcal{X}$.

For the numerical implementation, we parametrise the transport maps $T \in \mathcal{T}$ in~\eqref{eq:methods:transport_maps:def_model_class_for_TM} by invertible Neural Networks (INNs)~\cite{Kruse2021_HINTHierarchicalInvertible,Andrle2021_InvertibleNeuralNetworks}.
We want to point out that this is just one possible choice for the parametrisation and is not essential for the approach.
The chosen architecture ensures that we can analytically construct an inverse map and can easily compute $\det \nabla T$.
Although the discretised transport maps and their respective inverse are not differentiable as required in~\ref{eq:methods:transport_maps:def_model_class_for_TM}, we can still approximate all transport maps $T \in \mathcal{T}$ like the KR-transport.
For the model problems in this section, we use $L=4$ concatenations of transport and permutation layers. The used subnets are chosen to be small fully connected feed-forward neural networks with 2 hidden layers of width $10$ turned out to be sufficient for all numerical experiments.
As a consequence the training routine can be performed on a standard laptop computer within about 30 minutes.
\subsection{Affine models}%
\label{subsec:numerical_experiments:lin_toy_problem}
\label{subsec:numerical_experiments:lin_gaussian_example}

This section presents three numerical experiments based on the subsequent model problem.

\begin{example}[Affine linear forward models]
\label{ex:numerical_exaperiments:linear_forward_models}
    Consider image and parameter space to be $\mathcal{X} = \mathcal{Y} = \mathbb{R}^2$.
    We choose both forward models to be affine linear functions
    \begin{align}
        \label{eq:numerical_experiments:def_forward_models}
        F(x) \coloneqq Ax + c
        \qquad\mbox{and}\qquad
        f(x) \coloneqq x,
        \qquad\mbox{where }\ A = \begin{pmatrix} 2 & 0 \\ 0 & 3 \end{pmatrix}\mbox{and }c = \begin{pmatrix}5 \\ 5\end{pmatrix}.
    \end{align}
    The noise is assumed to be a Gaussian random variable $\Eta \sim \mathcal{N}\left(0,  \Sigma_{\Eta}\right)$ for $\Sigma_{\Eta} = (\sigma_1^2, \sigma_2^2) I_2$, with $\sigma = a F(x^{\star})$.
    The prior is chosen as $\pi_X = \mathcal{U}\left( [0,15]^2 \right)$ and the ground truth is set to $x^{\star} = \begin{pmatrix} 1,3\end{pmatrix}^T$.
\end{example}

The first example, illustrated in Figure~\ref{fig:numerical_examples:toy_problem:corner_plot_jensen_and_nmc}, compares the Bayesian posterior densities of Example~\ref{ex:numerical_exaperiments:linear_forward_models} for the full (\emph{green}) and reduced (\emph{red}) forward models as well as the posterior densities resulting from the unmodified iterative algorithm~\eqref{eq:algorithm:optimization:interative_definition_of_T_l} using the nMC-loss~\eqref{eq:algorithm:optimization:def_nmc_loss} (\emph{blue}) and the Jensen-loss~\eqref{eq:algorithm:optimization:loss_function_jensen} (\emph{orange}).
The figure illustrates that using the reduced forward model leads to a posterior far from the reconstruction value $x^{\star}$.
Both loss functions seem to be able to correct the shift of the posterior induced by the model error, but their variances are different than for the exact model.
Comparing the posterior distributions for the different loss-functions indicates that minimising the losses $L_{\mathrm{J}}$ and $L_{\mathrm{nMC}}$ leads to different results.
The more concentrated distribution associated with $L_{\mathrm{J}}$ bears the risk of being overly confident in the result.
This confidence is not warranted, as can be seen by the posterior for the full model.

\begin{figure}[ht]
    \centering
    \includegraphics[width=\linewidth]{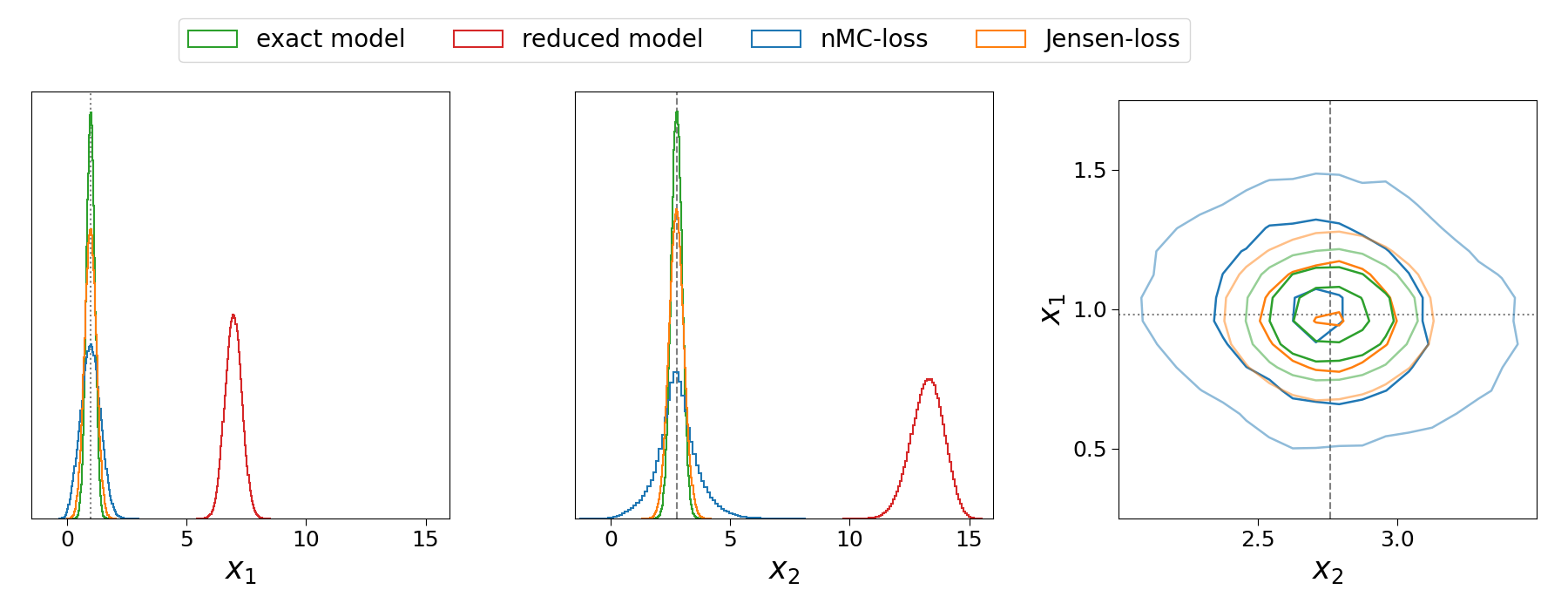}
    \caption{Posteriors for Example~\ref{ex:numerical_exaperiments:linear_forward_models}. 
    Resulting approximation of the posterior distribution of $x$ using the exact forward model and the reduced forward model, compared to the results from the updating method based on Jensen-loss $L_{\mathrm{J}}$ or nMC-loss $L_{\mathrm{nMC}}$ using $s = 10^3$ samples for the MC-approximations.}
    \label{fig:numerical_examples:toy_problem:corner_plot_jensen_and_nmc}
\end{figure}

\begin{figure}[htp]
    \centering
    \includegraphics[width=\linewidth]{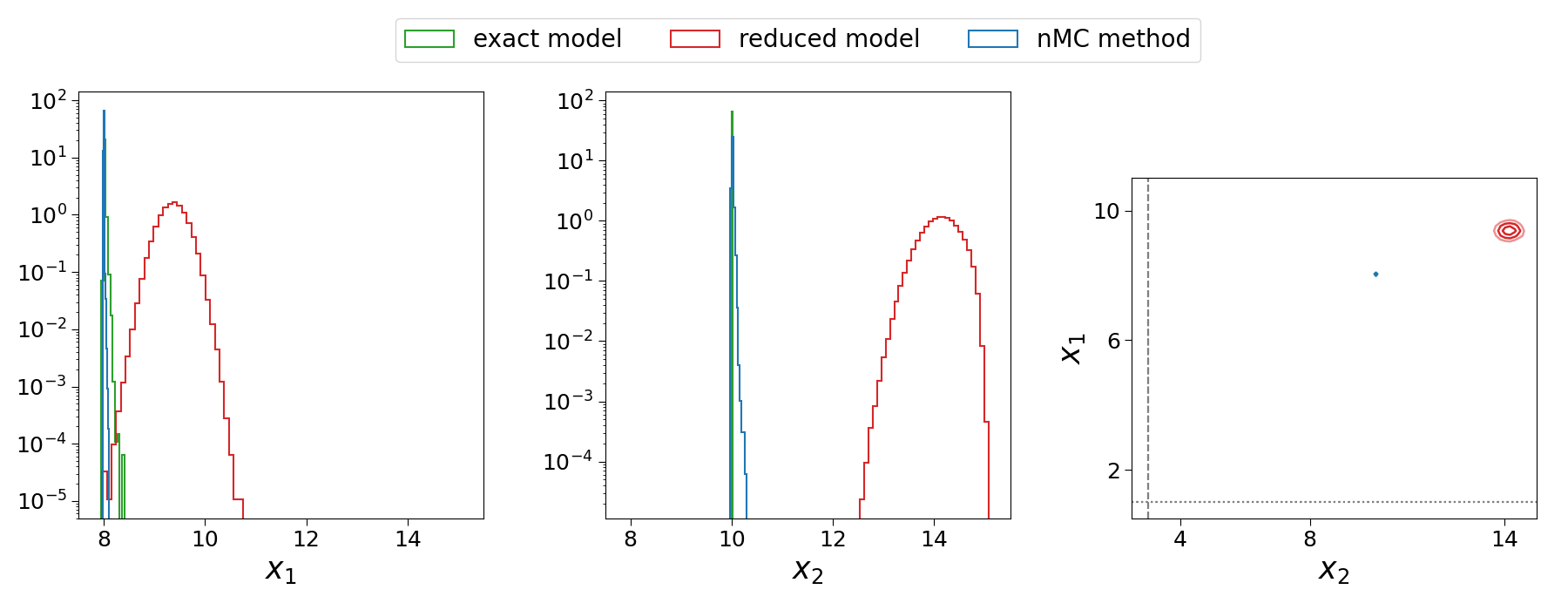}
    \caption{Posteriors for Example~\ref{ex:numerical_exaperiments:linear_forward_models} using a poorly chosen prior, given by the uniform distribution $\mathcal{U}( [8,10]\times[10,15] )$.
    By this choice, the true parameter value $x^{\star} = [1,3]^T$ is not contained in the support of the prior.
    In this setting we compare the performance of the three models: exact model (\textit{green}), reduced model (\textit{red}) and the model error approach using the nMC model (\textit{blue}).
    The one dimensional marginals are shown in logarithmic scale on the Y-axis.}
    \label{fig:numerical_experiments:linear_toy_problem:wrong_prior}
\end{figure}

For the second experiment, we choose a prior that does not contain the true value $x^{\star}$ to validate whether the iterative algorithm still improves over the reduced forward model.
Figure~\ref{fig:numerical_experiments:linear_toy_problem:wrong_prior} shows the posterior for the exact model (\emph{green}), the reduced model (\emph{red}) and the iterative algorithm using $L_{\mathrm{nMC}}$ (\emph{blue}).
We observe that the posterior from the iterative algorithm almost exactly matches the posterior for the exact model.
Both distributions are located at the boundary, the nearest possible to the true value $x^{\star}$.

\begin{figure}[htp]
    \centering
    \begin{subfigure}[b]{0.8\textwidth}
        \centering
        \includegraphics[width=\textwidth]{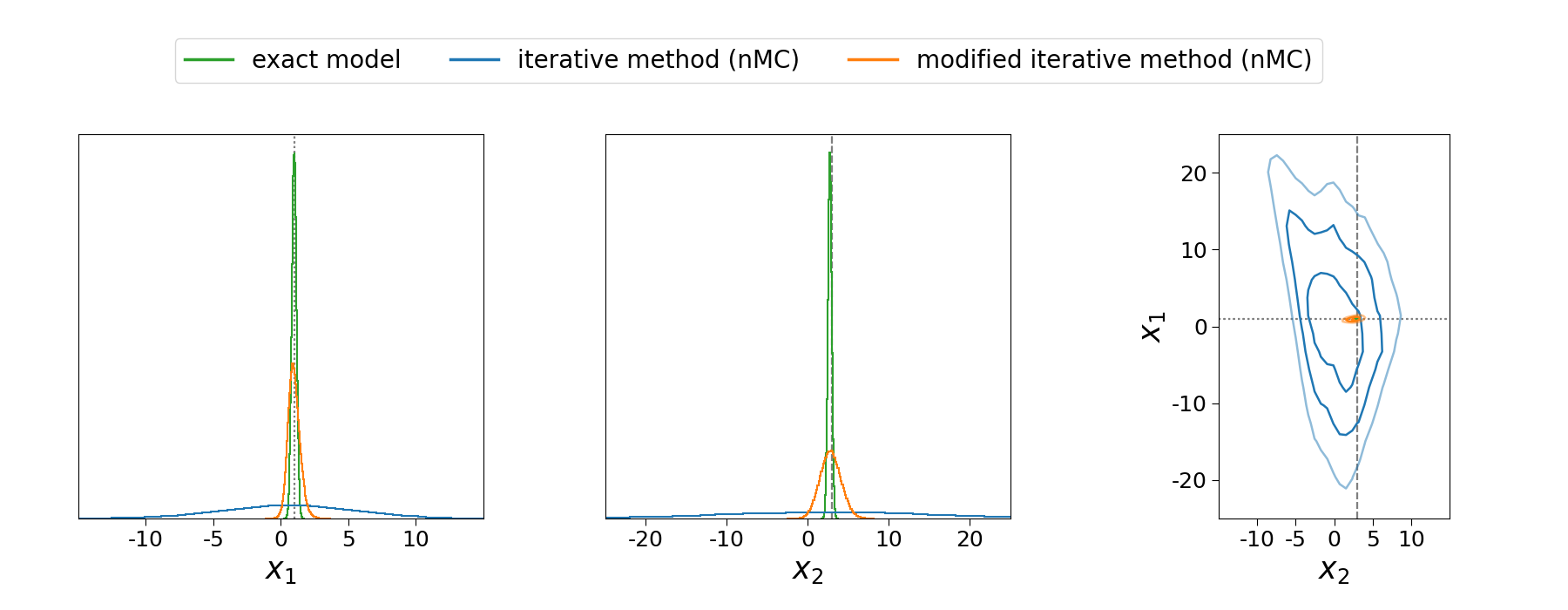}
        \caption{Comparison of the iterative models for $L_{\mathrm{mod}} = 30$ iterations for both models. }
        \label{subfig:numerics:toy_problem:iterative_models_iter_30}
    \end{subfigure}
    \hfill
    \begin{subfigure}[b]{0.8\textwidth}
        \centering
        \includegraphics[width=\textwidth]{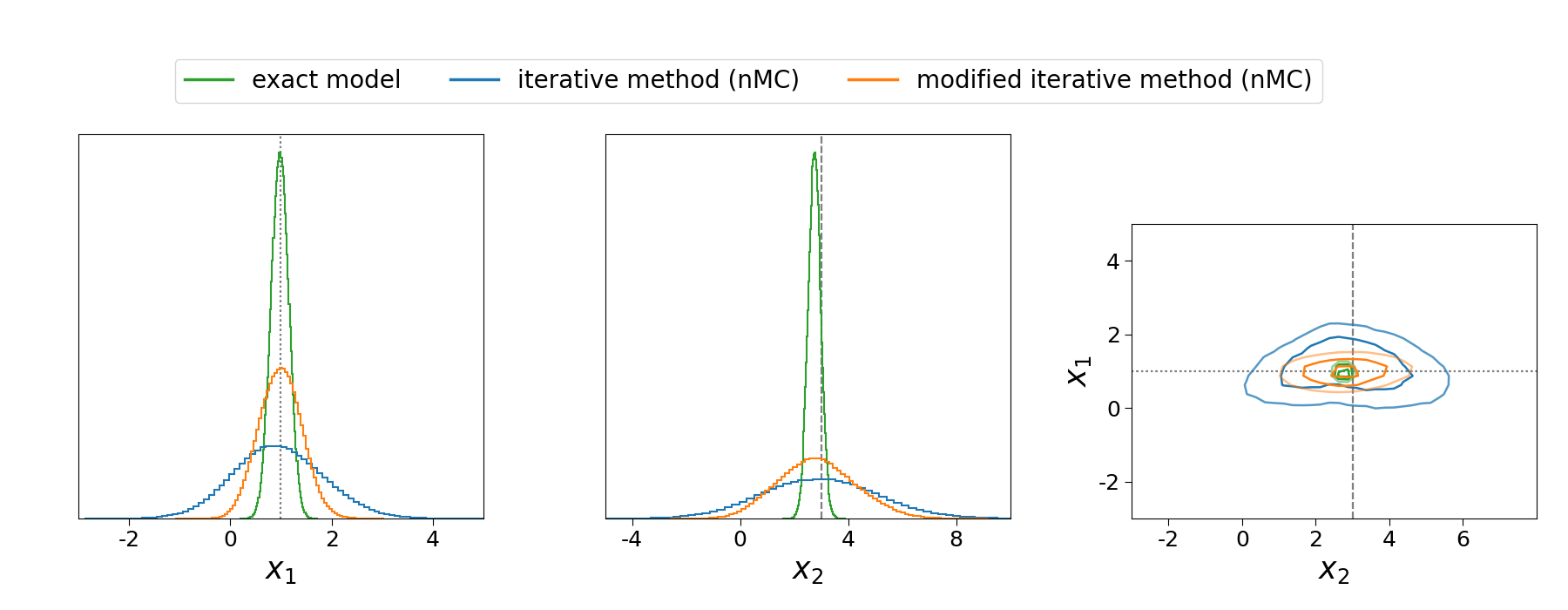}
        \caption{Comparison of the iterative models for $L_{\mathrm{mod}} = 50$ iterations for both models.}
        \label{subfig:numerics:toy_problem:iterative_models_iter_50}
    \end{subfigure}
    \hfill
    \begin{subfigure}[b]{0.8\textwidth}
        \centering
        \includegraphics[width=\textwidth]{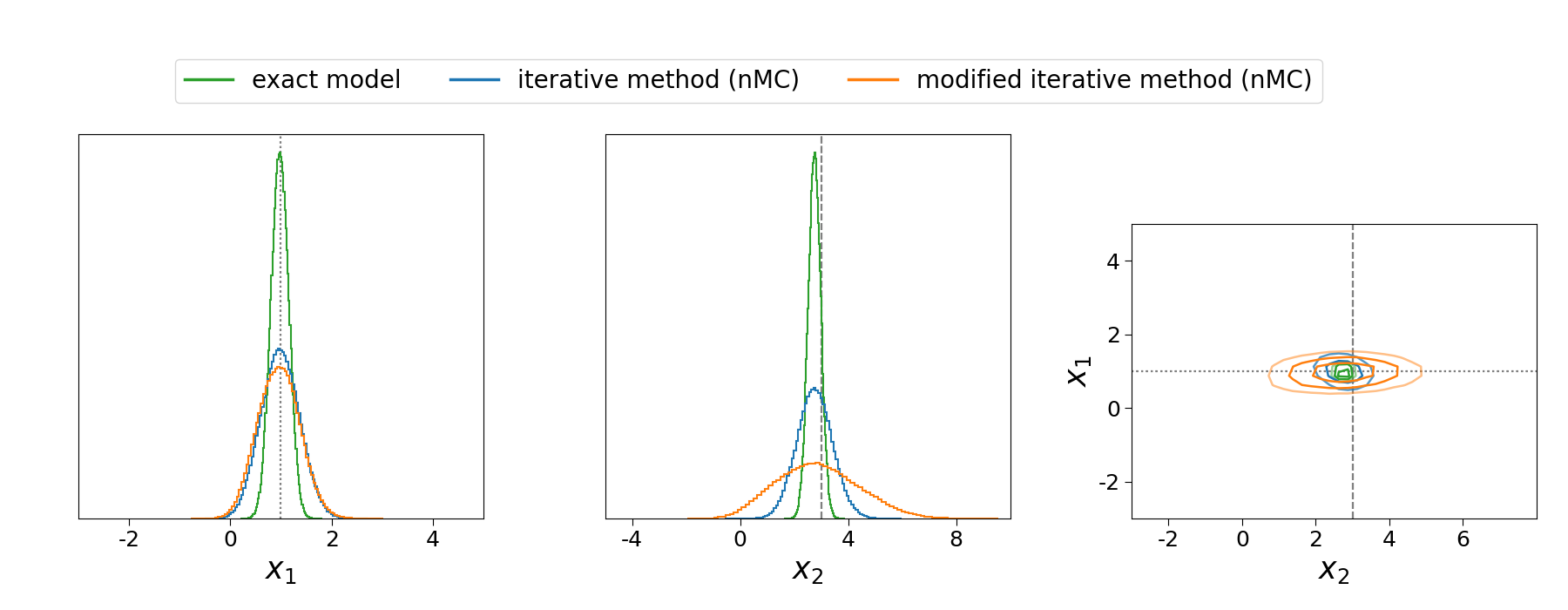}
        \caption{Comparison of the iterative models for $L_{\mathrm{mod}} = 80$ iterations for both models.}
        \label{subfig:numerics:toy_problem:iterative_models_iter_80}
    \end{subfigure}
    \caption{Posterior distributions for Example~\ref{ex:numerical_exaperiments:linear_forward_models} for three different models, the exact model (\textit{green}), the iterative model from~\eqref{eq:algorithm:optimization:def_nmc_loss} (\textit{blue})
    and the modified model from~\eqref{eq:algorithm:empirical_iteration:iteration_TL_iterative_mod} (\textit{orange}).
    Here we compare the two iterative models for three different number of iterations $L_{\mathrm{mod}}~\in~\{30, 50, 80\}$ for both iterative models.}
    \label{fig:numerics:toy_problem:compare_iterative_models_for_different_iterations}
\end{figure}

In the third numerical experiment, we compare the iterative algorithm \emph{Iterative}~\eqref{eq:algorithm:empirical_iteration:iteration_TL_iterative} 
and the modified iterative algorithm \emph{Iterative$_{\mathrm{mod}}$}~\eqref{eq:algorithm:empirical_iteration:iteration_TL_iterative_mod} to the posterior for the exact model, the \emph{exact posterior}. 
Just like the unmodified version, \emph{Iterative$_{\mathrm{mod}}$} corrects the shift induced by the model error.
Regarding the variance, \emph{Iterative$_{\mathrm{mod}}$} provides slightly elevated values, which is not a significant problem in practice.
However, the modified iteration leads to a posterior, which is not as close to the posterior for the exact model, as it is the case for the unmodified algorithm.
Only for a small number of iterations, the posterior approximation from \emph{Iterative$_{\mathrm{mod}}$} is closer to the \emph{exact posterior}.
In this sense, \emph{Iterative$_{\mathrm{mod}}$} performs better than \emph{Iterative} up to a number of $L_{mod}=80$ iterations for each algorithm.
So the actual advantage of the modified algorithm is contained given a good first approximation for the posterior using as few function evaluations of the exact model as possible. 
Using this first approach as a starting point, one could use the \emph{Iterative} model to get a higher accuracy of the posterior.

\subsection{Simple machine}
\label{subsec:numerical_experiments:OHagan}

We apply the updating algorithm to the simple machine from \citeauthor{Brynjarsdottir2014_Learningphysicalparameters} presented in \cite{Brynjarsdottir2014_Learningphysicalparameters}.
Here, the forward model describes a simple machine, whose delivered amount of work is proportional to the amount of effort that is put into it.
This is, of course, an idealised concept that neglects, among other things, the friction in the machine.
Including such influences causes the exact forward model to differ from the idealised concept depending on, e.g.\ the magnitude of friction.

\begin{example}[simple machine]
    \label{ex:numerical_experiemnts:simple_machine}
    For the simple machine consider one dimensional image and parameter spaces $\mathcal{X} = \mathcal{Y} = \mathbb{R}$ and define the forward models
    \begin{align}
        F(x) \coloneqq \frac{2a\cdot x}{a+x} 
        \qquad\mbox{and}\qquad
        f(x) \coloneqq 2x~,
        \qquad\mbox{for}\ a=10.
    \end{align}
    Here, the reduced forward model $f$ is a linear approximation at $x_0 = 0$ of the exact model $F$. 
    The noise is defined by a Gaussian random variable
    \begin{equation}
        \Eta \sim \mathcal{N}\left(0, \sigma^2 \right) ~~ 
        \mathrm{with} ~~\sigma = a F(x^{\star}),
    \end{equation}
    and the prior is set to $\pi_X := \mathcal{U}\left( [0,15] \right)$.
    To investigate the performance of our algorithm regarding different model error magnitudes, we consider two different ground truth values \mbox{$x^{\star} = 3$} and $\hat{x}^{\star} = 10$.
\end{example}

For the present example we first compare the posterior approximations resulting from the two different losses $L_{\mathrm{J}}$ and $L_{\mathrm{nMC}}$ (see Figure~\ref{fig:numerical_experiemnts:ohagan_example_forward_models_and_posterios_plots}). 
Then we observe the behaviour of the resulting posterior distributions for an increasing number of measurements, shown in Figure~\ref{fig:numerical_experiments:OHagan:more_accuracy_for_more_measurements}.
The difference between the forward models increases for larger $x$ as shown in Figure~\ref{fig:numerical_experiments:OHagan:forward_models} on the prior-domain $[0,15]$.
This is also reflected in the posterior distributions for the different setups shown in Figure~\ref{fig:numerical_experiments:OHagan:posterior_all_models}. 
Comparing the two posteriors for the reduced forward model we see, that a larger model error results in a bigger shift of the posterior distribution.
Also, the iterative algorithm approximates the mean value well, independent of the model error magnitude.
Similar to Example~\ref{ex:numerical_exaperiments:linear_forward_models} the variance is not captured well by the iterative algorithm.
Moreover we see that both loss functions yield a similar approximation result. 
The main difference can be seen in the variance, which is slightly larger for the nMC-loss. 
In the second setup the mean value of the posteriors does not agree well with the true value $\hat{x}^{\star}$, even for the exact model. 
This is caused by the magnitude of the noise added to the synthetic data. 
Since the variance is underestimated by the iterative algorithm, $\hat{x}^{\star}$ is not even contained within a $95\%$ confidence interval of the posterior distributions for both $L_{\mathrm{J}}$ and $L_{\mathrm{nMC}}$.
This emphasises why an underestimation of the variance is in general problematic.

In the second numerical experiment for Example~\ref{ex:numerical_experiemnts:simple_machine} we investigate if the model error accuracy for the expectation increases if the posterior concentrates, as claimed in Remark~\ref{rmk:mehtods:model_error:true_prediction_for_M_linear}.
To get a posterior concentrated in a smaller region, we increase the number of measurement points to $y = ( y_1, \ldots, y_d )^T$ with $d = 500$.
The posterior approximation results for this setting for the example of a simple machine with $\hat{x}^{\star} = 10$ are displayed in Figure~\ref{fig:numerical_experiments:OHagan:more_accuracy_for_more_measurements}.
Here we observe that, independent of the used forward model, the variance becomes smaller for a larger number of measurements.
For the reduced model the expectation converges to the wrong value. 
This shows, once more, that using a wrong forward model induces an error which can not be compensated for by increasing the number of measurement.
On the contrary, for the exact forward model and the model error approach from~\eqref{eq:algorithm:optimization:def_nmc_loss} the mean indeed tends to the correct value $\hat{x}^{\star} = 10$. 
This confirms, that a smaller posterior actually results in greater mean value approximation accuracy for the model error approach.
In contrast to the mean values, however, the model error approach does not improve variance approximation if we increase the number of measurements. This behaviour can be observed in the right image of Figure~\ref{fig:numerical_experiments:OHagan:more_accuracy_for_more_measurements}. 
As an explanation, note that the variance of the posterior depends on the slope of the used forward model. 
Remark~\ref{rmk:mehtods:model_error:true_prediction_for_M_linear} only shows that the mean value of the posterior tends to the ground truth if the true posterior become a peak.
But since the linearization does not approximate the slope of the true forward model (cf.\ Figure~\ref{fig:methods:model_error:prediction_of_Y_for local_linearisation}) the variance does not match the true variance even though more measurement points are used.
As discussed in Figure~\ref{fig:numerical_experiments:OHagan:posterior_all_models}, an underestimation of the variance leads to an error in the reconstruction of $x$, which is an undesirable outcome.

\begin{figure}[ht]
    \centering
    \begin{subfigure}{0.3\textwidth}
        \centering
        \includegraphics[width=\linewidth]{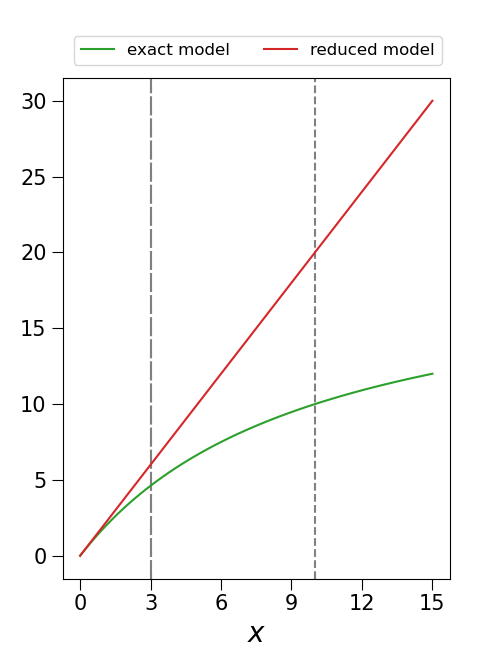} 
        \caption{Exact (\textit{green line}) and reduced (\textit{red line}) forward models $F,f$.}
        \label{fig:numerical_experiments:OHagan:forward_models}
    \end{subfigure}
    \hfill
    \begin{subfigure}{0.65\textwidth}
        \centering
        \includegraphics[width=\linewidth]{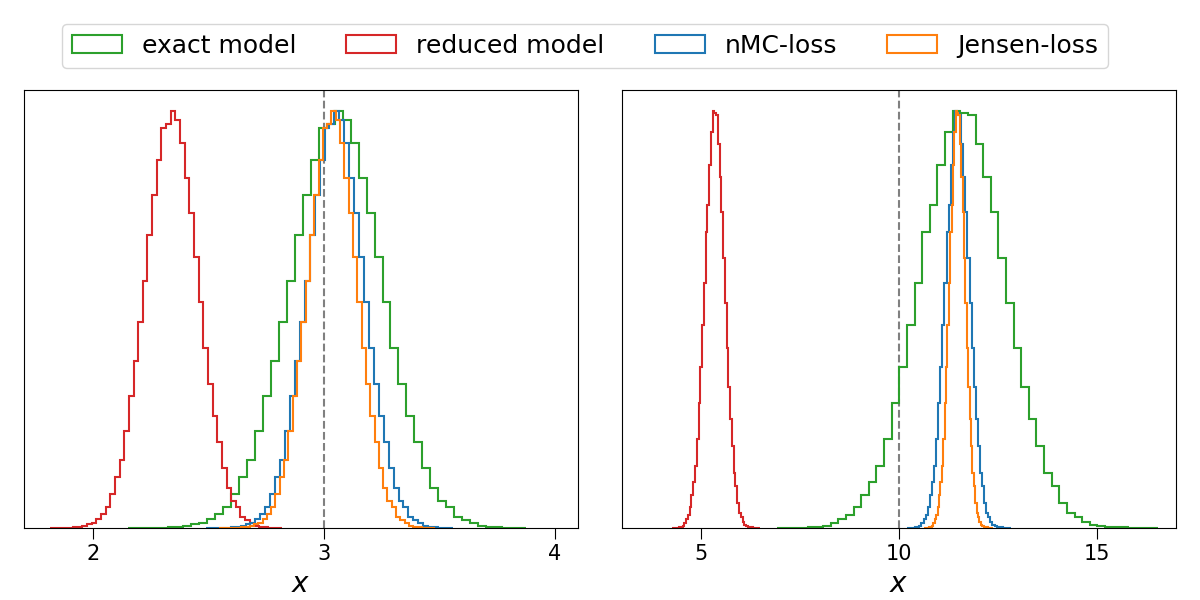}
        \caption{Posterior distributions for different models: for the exact forward model (\textit{green}),
        the reduced forward model (\textit{red}) and the updating method using either the Jensen-loss (\textit{orange}) or the nested MC-loss (\textit{blue}).
        The left image shows the posterior for the true parameter $x^{\star} = 3$ and the right one the posteriors for $x^{\star} = 10$.}
        \label{fig:numerical_experiments:OHagan:posterior_all_models}
    \end{subfigure}
    \caption{Forward models and posterior distributions for Example~\ref{ex:numerical_experiemnts:simple_machine} of the \emph{simple machine}.}
    \label{fig:numerical_experiemnts:ohagan_example_forward_models_and_posterios_plots}
\end{figure}

\begin{figure}[ht]
    \centering
    \includegraphics[width=0.6\linewidth]{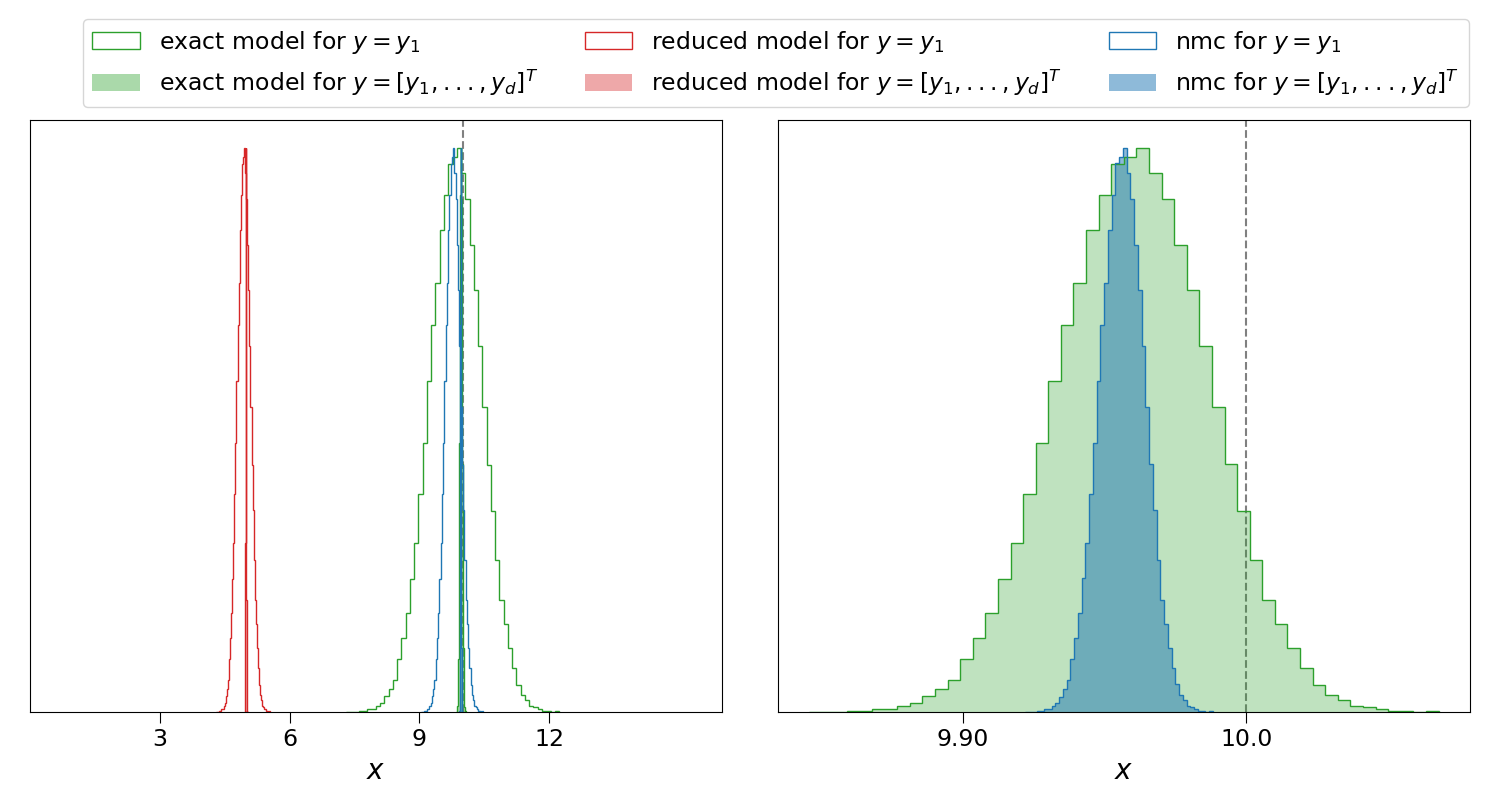}
    \caption{Posterior distributions for increasing the number of measurements for the example of the simple machine in~\ref{subsec:numerical_experiments:OHagan}.
    Here we compare the three different models: reduced model (\textit{red}), full model (\textit{green}) and nMC--model (\textit{blue}) from~\eqref{eq:algorithm:optimization:def_nmc_loss}.
    The shaded distributions are for one measurement $y =y_1 \in \mathbb{R}$ are given by the distributions, that are not filled. 
    The filled ones are resulting from the reconstruction using  $d = 500$ measurements $y = ( y_1, \ldots, y_d )^T$.
    In the right image we have enlarged the region around the true value $x^{\star} = 10$.
    So the distributions for the exact model and the nMC--model for $d=500$ measurements can be compared better.} 
    \label{fig:numerical_experiments:OHagan:more_accuracy_for_more_measurements}
\end{figure}

\section{Conclusion}\label{sec:conclusion}

In this paper we extend the model error approach presented in~\cite{Calvetti2018_Iterativeupdatingmodel} and contextualize the required assumption in relation to other approaches (see Table~\ref{tab:methods:model_error:table_models_overview}). 
As a justification that the used model error distribution provides a reasonable choice, we show in Theorem~\ref{thm:methods:model_error:model_error_distribution_follow_from_assumptions} that the model error distribution satisfies two key properties of the exact model error distribution. 

As a next step, we adapt the iterative approach from~\cite{Calvetti2018_Iterativeupdatingmodel} to use transport maps, instead of MCMC, for sampling from the posterior distribution. 
This has the advantage that we can draw many samples very fast, after training the transport map. 
The computational cost of both algorithms are compared in Table~\ref{tab:algortithm:compare_cost:table_costs}.

To construct the transport map we define two different loss functions; the Jensen-loss~\eqref{eq:algorithm:optimization:loss_function_jensen} and the nMC-loss~\eqref{eq:algorithm:optimization:def_nmc_loss} 
and compare them based on the numerical experiments in section~\ref{sec:num_experiemnts}.
Here we see, that both loss functions lead to similar results even though the Jensen-loss is an empirical upper bound for our objective.
Indeed, we demonstrate in Example~\ref{ex:algorithm:optimization:bias_of_jensen_loss}, that the Jensen-loss can lead to relevant deviations from the results of the nMC-loss.
Since the convergence rate of the nMC-loss is only marginally worse than for the Jensen-loss, the nMC-loss should be preferred.

The numerical experiments in section~\ref{sec:num_experiemnts} show, that our algorithm produces posterior densities which are very close to the posterior density of the exact model.
Although the mode of the iterative posterior density fits the mode of the exact posterior density very well, the variance are not correct.%
This can become problematic, when it leads to an underestimation of the variance, as in Figure~\ref{fig:numerical_experiments:OHagan:posterior_all_models}.
To what extent this depends on the forward model or on the algorithm itself is still an open question.

A downside of the iterative algorithm is the high number of evaluations that are needed of the exact model $F$. 
This increases the computational cost, as we assume that the evaluation of $F$ is computationally expensive. 
We address this problem with the modified algorithm~\eqref{eq:algorithm:empirical_iteration:iteration_TL_iterative_mod}, which requires fewer evaluations of $F$ but also looses accuracy in the resulting posterior approximation. 
To combine the best of both worlds, the modified algorithm can be used to obtain a first approximation of the posterior and use the corresponding transport map as initial value for a few iterations of the full iterative algorithm.
This procedure shows promise for applications and it`s application is left for future work.

Another approach to address the problem of drawing posterior samples with a computationally expensive forward model is the \emph{multi-level delayed acceptence} (MLDA) method presented in~\cite{Lykkegaard2023_MultilevelDelayedAcceptance}.
Here, several forward models with different accuracy levels are used to obtain a rough posterior approximation on the coarse level and improve that approximation using models on finer levels.
This strongly reduces the number of evaluations of the expensive forward model and leads to an accurate posterior distribution. 
In contrast to our approach the MLDA uses MCMCM to sample from the posterior instead of transport maps.

\appendix

\clearpage
\printbibliography

\end{document}